\renewcommand{\nllabel}[1]
 {{\let\@currentlabel\algocf@currentlabel
  \let\@currentcounter\algocf@currentcounter
  \label{#1}}}%
\renewcommand{\algocf@nl@sethref}[1]{%
  \renewcommand{\theHAlgoLine}{\thealgocfproc.#1}%
  \hyper@refstepcounter{AlgoLine}%
  \gdef\algocf@currentlabel{#1}%
  \gdef\algocf@currentcounter{AlgoLine}%
 }%
\pgfplotsset{compat=1.18}
\newcommand{\algorithmstyle}[1]{\renewcommand{\algocf@style}{#1}}
\newcommand{\nosemic}{\renewcommand{\@endalgocfline}{\relax}}
\newcommand{\dosemic}{\renewcommand{\@endalgocfline}{\algocf@endline}}
\let\oldnl\nl
\newcommand{\nonl}{\renewcommand{\nl}{\let\nl\oldnl}}
\newcommand{\report}[1]{#1}
\newcommand{\paper}[1]{}
   \def\@citecolor{blue}%
   \def\@urlcolor{blue}%
   \def\@linkcolor{blue}%
\def\orcidID#1{\href{http://orcid.org/#1}{\protect\raisebox{-1.25pt}{\protect\includegraphics{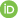}}}}
\newenvironment{proofsketch}{%
  \proof}{\endproof}
\renewcommand{\epsilon}{\varepsilon}
\let\oldphi\phi
\let\oldvarphi\varphi
\renewcommand{\varphi}{\oldphi}
\renewcommand{\phi}{\oldvarphi}
\newcommand{\pl}[1]{\textsf{#1}}
\newcommand{\tool}[1]{\textsf{#1}}
\def\increment{\hspace{-.05em}\raisebox{.4ex}{\tiny\bf ++}}
\def\decrement{\hspace{+.05em}\raisebox{.4ex}{\tiny\bf {-}{-}}}
\def\CXX{{\pl{C}\nolinebreak[4]\increment}}
\renewcommand{\vec}[1]{\bm{\mathrm{#1}}}
\newcommand{\rec}{\mathsf{rec}}
\newcommand{\push}{\mathsf{push}}
\newcommand{\pop}{\mathsf{pop}}
\newcommand{\add}{\mathsf{add}}
\newcommand{\ti}{\mathsf{ti}}
\newcommand{\accel}{\mathsf{accel}}
\newcommand{\underapprox}{\mathsf{ua}}
\newcommand{\tail}{\mathsf{tail}}
\newcommand{\Loop}{\mathsf{loop}}
\newcommand{\inc}{\mathsf{inc}}
\newcommand{\dec}{\mathsf{dec}}
\newcommand{\blocked}{\textsc{blocked}}
\newcommand{\blockingclause}{\mathsf{blocking\_clause}}
\newcommand{\checksat}{\mathsf{check}\_\mathsf{sat}}
\newcommand{\getmodel}{\mathsf{model}}
\newcommand{\mbp}{\mathsf{cvp}}
\newcommand{\mbip}{\mathsf{cvp}}
\newcommand{\sip}{\mathsf{sip}}
\newcommand{\unsafe}{\mathsf{unsafe}}
\newcommand{\unknown}{\mathsf{unknown}}
\newcommand{\encode}{{\sf encode}}
\newcommand{\safe}{\mathsf{safe}}
\newcommand{\tip}{\mathsf{tp}}
\renewcommand{\AA}{\mathcal{A}}
\newcommand{\CC}{\mathcal{C}}
\newcommand{\VV}{\mathcal{V}}
\newcommand{\trace}{\mathsf{trace}}
\newcommand{\compose}{{\mathsf{compose}}}
\newcommand{\concat}{\mathrel{::}}
\newcommand{\init}{\mathsf{init}}
\newcommand{\err}{\mathsf{err}}
\newcommand{\QF}{\mathsf{QF}}
\renewcommand{\partial}{\rightharpoonup}
\def\mystack#1\over#2_#3{%
   \mathrel {%
      \setbox0=\hbox{$\scriptscriptstyle #1$}%
      \setbox1=\hbox{$#2$}%
      \ifdim\wd1>\wd0 \kern .5\wd1 \else \kern .5\wd0 \fi
      \vbox{
         \offinterlineskip
         \moveleft.5\wd0 \box0
         \kern.3ex
         \moveleft.5\wd1 \hbox{$#2_#3$}
}}}
\newcommand{\ind}[3][]{
  \ifthenelse{\equal{#1}{}}{\overset{\scriptscriptstyle(#3)}{#2}}{{\mystack (#3) \over #2_#1}}
}
\newcommand{\ZZ}{\mathbb{Z}}
\newcommand{\NN}{\mathbb{N}}
\newcommand{\TT}{\mathcal{T}}
\newcommand{\Def}{\mathrel{\mathop:}=}
\renewcommand{\emptyset}{\varnothing}
\def\me{JG}
\newcommand{\comment}[2][ALL]{%
  \ifthenelse{\equal{ALL}{#1}}%
  {\footnote{!!! #2}}%
  {%
    \ifthenelse{\equal{\me}{#1}}%
    {\footnote{!!! #2}}%
    {}%
  }%
}
\DeclareMathOperator{\dom}{dom}
\newcommand{\id}{\mathsf{id}}
\crefname{algorithm}{alg.}{algorithms}%
\crefname{equation}{eq.}{equations}%
\crefname{chapter}{chapter}{chapters}%
\crefname{section}{sect.}{sections}%
\crefname{appendix}{app.}{appendices}%
\crefname{enumi}{item}{items}%
\crefname{footnote}{footnote}{footnotes}%
\crefname{figure}{fig.}{figures}%
\crefname{table}{table}{tables}%
\crefname{theorem}{thm.}{theorems}%
\crefname{lemma}{lemma}{lemmas}%
\crefname{corollary}{cor.}{corollaries}%
\crefname{proposition}{proposition}{propositions}%
\crefname{definition}{def.}{definitions}%
\crefname{result}{result}{results}%
\crefname{example}{ex.}{examples}%
\crefname{remark}{remark}{remarks}%
\crefname{note}{note}{notes}%
\crefname{lstlisting}{listing}{listings}%
\crefname{requirement}{req.}{requirements}%
\title{Infinite State Model Checking by Learning Transitive Relations}
\author{Florian Frohn$^{(\href{mailto:florian.frohn@informatik.rwth-aachen.de}{\mbox{\Letter}})}$\orcidID{0000-0003-0902-1994} and Jürgen Giesl$^{(\href{mailto:giesl@informatik.rwth-aachen.de}{\mbox{\Letter}})}$\orcidID{0000-0003-0283-8520}}
\email{\{florian.frohn,giesl\}@informatik.rwth-aachen.de}}}
\authorrunning{F.\ Frohn, J.\ Giesl}
\begin{document}

\renewcommand{\thelstlisting}{\arabic{lstlisting}}

\maketitle

\begin{abstract}
  We propose a new approach for proving safety of infinite state systems.
  It extends the analyzed system by \emph{transitive
  relations} until its \emph{diameter} $D$ becomes finite, i.e., until constantly many
  steps suffice to cover all reachable states, irrespective of the
  initial state.
  Then we can prove safety by checking that no error state is reachable in $D$ steps.
  To deduce transitive relations, we use \emph{recurrence analysis}.
  While recurrence analyses can usually find conjunctive relations only, our approach also discovers disjunctive relations by combining recurrence analysis with \emph{projections}.
  An empirical evaluation of the implementation of our approach in our tool \tool{LoAT}
  shows that it is highly competitive with the state of the art.
\end{abstract}

\section{Introduction}
\label{sec:intro}
\label{sec:overview}

We consider relations defined by SMT formulas over two disjoint vectors of \emph{pre-} and \emph{post-variables} $\vec{x}$ and $\vec{x}'$.
Such \emph{relational formulas} can easily represent \emph{transition systems}
(TSs), linear \emph{Constrained Horn Clauses} (CHCs), and \emph{control-flow automata}
(CFAs).\footnote{To this end, it suffices to introduce one additional variable that represents the control-flow location (for TSs and CFAs) or the predicate (for linear CHCs).}
Thus, they subsume many popular intermediate representations used for verification of systems specified in more expressive languages.

In contrast to, e.g., source code, relational formulas are unstructured.
However, source code may be unstructured, too (e.g., due to {\tt goto}s), so being independent from the structure of the input makes our approach broadly applicable.

\begin{example}[Running Example]
  \label{ex:ex1}
  Let $\tau \Def \tau_{\inc} \lor \tau_{\dec}$ with:
  \begin{align*}
    w \doteq 0 \land x' \doteq x + 1 \land y' \doteq y + 1 \tag{$\tau_\inc$} \\
    w' \doteq w \land w \doteq 1 \land x' \doteq x - 1 \land y' \doteq y - 1 \tag{$\tau_\dec$}
  \end{align*}
  We use ``$\doteq$'' for equality in relational formulas.
  The formula $\tau$ defines a relation $\to_{\tau}$ on $\ZZ \times \ZZ \times \ZZ$ by relating the non-primed pre-variables with the primed post-variables.
  So for all $v_w,v_x,v_y, v'_w,v'_x, v'_y \in \ZZ$, we have $(v_w,v_x,v_y) \to_{\tau} (v'_w,v'_x, v'_y)$ iff $[w/v_w,x/v_x,y/v_y,w'/v'_w,x'/v'_x,y'/v'_y]$ is a model of $\tau$.
  Let the set of \emph{error states} be given by $\psi_{\err} \Def w \doteq 1 \land x \leq 0 \land y > 0$.
\end{example}
With the \emph{initial states} $\psi_{\init} \Def x \doteq 0 \land y \doteq 0$ this example\footnote{\href{https://github.com/chc-comp/chc-comp23-benchmarks/blob/main/LIA-lin/chc-LIA-Lin_005.smt2.gz}{\tt extra-small-lia/bouncy\_symmetry} from the \href{https://github.com/orgs/chc-comp/repositories}{CHC competition}}
is challenging for existing model checkers:
Neither the default configuration of \tool{Z3/Spacer} \cite{z3,spacer}, nor \tool{Golem}'s
\cite{golem} implementation of Spacer, LAWI \cite{lawi}, IMC \cite{imc}, TPA \cite{tpa-multiloop},
PDKIND \cite{pdkind}, or predicate abstraction \cite{predicate_abstraction} can prove its safety.
In contrast, all of these techniques can prove safety with the more general initial states $\psi_{\init} \Def x \doteq y$.
As all of them are based on \emph{interpolation}, the reason might be that the inductive invariant $x \doteq y$ is now a subterm of $\psi_{\init}$, so it is likely to occur in interpolants.
However, this explanation is insufficient, as all techniques fail again for $\psi_{\init} \Def x \doteq y \land y \doteq 0$.

This illustrates a well-known issue of interpolation-based verification techniques:
They are highly sensitive to minor changes of the input or the underlying interpolating SMT solver (e.g., \cite[p.~102]{gspacer}).
So while they can often solve difficult problems quickly, they sometimes fail for easy examples like the one above.

In another line of research, \emph{recurrence analysis} has been used for software verification \cite{kincaid15,kincaid17}.
Here, the idea is to extract recurrence equations from loops and solve them to summarize the effect of arbitrarily many iterations.
While recurrence analysis often yields very precise summaries for loops without branching, these summaries are conjunctive.
However, for loops with branching, disjunctive summaries are often important to distinguish the branches.

In this work, we embed recurrence analysis into \emph{bounded model checking}
(BMC) \cite{bmc2}, resulting in a robust, competitive model checking algorithm.
To find disjunctive summaries, we exploit the structure of the relational
formula to partition the  state
space on the fly via \emph{model based projection} \cite{spacer} (which
approximates quanti\-fier elimination), and a variation of recurrence analysis
called \emph{transitive projection}.

Our approach is inspired by ABMC \cite{abmc}, which combines BMC with \emph{acceleration} \cite{acceleration-calculus,octagonsP,bozga10}.
ABMC uses \emph{blocking clauses} to speed up the search for counterexamples, but they turned out to be of little use for this purpose.
Instead, they enable ABMC to prove safety of certain challenging benchmarks. This
moti\-vated the development of our novel dedicated algorithm for proving safety via
BMC and blocking clauses.
See \Cref{sec:related} for a detailed comparison with ABMC.

\paper{
  \input{overview_short}
}
\report{
  \subsubsection{Overview}

\paper{\vspace*{-.3cm}}

\begin{algorithm}[t]
  \encode(initial states)\;
  \While{$\top$}{
    \lIf{the current encoding contains an error state}{\Return{$\unknown$}} \nllabel{alg1:fail}
    \encode(one more unrolling of $\to_\tau$ and of all learned relations)\; \nllabel{alg1:unroll}
    \encode(learned relations must not be used twice in a row)\; \nllabel{alg1:trans}
    \lIf{$\neg\checksat()$}{\Return{$\safe$} \nllabel{alg1:safe}}
    $\sigma \gets \getmodel(); \quad \text{ let } \vec{v}_1 \to_{\tau_1} \ldots \to_{\tau_{k-1}} \vec{v}_{k}$ be the run that corresponds to $\sigma$\; \nllabel{alg1:sigma}
    \If{$\vec{v}_s \to_{\tau_s} \ldots \to_{\tau_{s+\ell-1}} \vec{v}_{s+\ell}$ is a loop \nllabel{alg1:loop}}{
      \If{$\vec{v}_s \not\to_{\pi} \vec{v}_{s+\ell}$ for all learned relations $\to_\pi$}{
        learn a transitive relation from the loop and $\sigma$\;\nllabel{alg1:learn}
      }
      pick learned relation $\to_\pi$ with $\vec{v}_s \to_{\pi} \vec{v}_{s+\ell}$\;
      \encode($\to_{\pi}$ has preference over loops from
 $\vec{v}_s$ to $\vec{v}_{s+\ell}$)\;\nllabel{alg1:block}
      backtrack to the point where
$\to_\tau$ was only unrolled $s-1$ times\;
    }
  }
  \caption{TRL (high level); Input: initial \& error states, relation~$\to_\tau$}
  \label{alg:overview}
  \end{algorithm} 
We start with an informal explanation of our approach.
Given a relational formula $\tau$, one can prove safety with BMC by unrolling its transition relation $\to_\tau$
$D$ times, where $D$ is the \emph{diameter} \cite{bmc2}.
So $D$ is the ``longest shortest path'' from an initial to some other state, or more formally:
\[
  D \Def \sup_{\vec{v}' \text{ is reachable from an initial state}} \left( \min \{i \in \NN \mid \vec{v} \text{ is an initial state}, \vec{v} \to_\tau^i \vec{v}' \}\right)
\]
So every reachable state can be reached in $\leq D$ steps.
Hence, if BMC finds no counterexample in $D$ steps, the system is safe.
\Cref{alg:overview} is a high level version of our new algorithm \emph{Transitive Relation Learning} (TRL),
where this observation is exploited in \Cref{alg1:safe}, as unsatisfiability of the underlying SMT problem implies that the diameter has been reached (see the end of this section for more details).
For infinite state systems, $D$ is rarely finite:
Consider the relational formula~\eqref{eq:count} with the initial state $x \doteq 0$.

\smallskip
\noindent
\begin{minipage}{0.49\textwidth}
\begin{equation}
  \label{eq:count}
  x' \doteq x + 1
\end{equation}
\end{minipage}
\begin{minipage}{0.49\textwidth}
\begin{equation}
  \label{eq:learned}
  n > 0 \land x' \doteq x + n
\end{equation}
\end{minipage}

\medskip
\noindent
Then $k \in \NN$ steps are needed to reach a state with $x \doteq k$.
So $D$ is infinite, i.e., the diameter cannot be used to prove safety directly.

The core idea of TRL is to ``enlarge'' $\tau$ to decrease its diameter (even though TRL
never computes the diameter explicitly).
To this end, our approach ``adds'' transitive relations to $\tau$,
which will be called \emph{learned relations} in the sequel (\Cref{alg1:learn}).
For \eqref{eq:count}, TRL would learn the relational formula \eqref{eq:learned}.
Then the diameter of $\eqref{eq:count} \lor \eqref{eq:learned}$ is $1$, as a state with $x \doteq k$ can be reached in $1$ step by setting $n$ to $k$.
Transitive relations are particularly suitable for decreasing the diameter, as they allow us to ignore runs where the same learned relation is used twice in a row.
This is exploited in \Cref{alg1:trans} of \Cref{alg:overview}, where \encode($P$) means that we add an encoding of the property $P$ to the underlying SMT solver.

This raises the questions \emph{when} and \emph{how} new relations should be learned.
Regarding the ``when'', our approach unrolls the  transition relation, just like BMC (\Cref{alg1:unroll}).
Thereby, it looks for \emph{loops} (\Cref{alg1:loop}) and learns a relation when a new loop is encountered (\Cref{alg1:learn}).
However, as we analyze unstructured systems, the definition of a ``loop'' is not obvious.
Details will be discussed in \Cref{sec:loops}.

Regarding the ``how'', TRL ensures that we have a \emph{model} for the loop, i.e., a valuation $\sigma_\Loop$ that corresponds to an evaluation of the loop body (which can be extracted from $\sigma$ in \Cref{alg:overview}).
Then given a loop $\tau_\Loop$ and a model $\sigma_\Loop$, apart from transitivity we only require two more properties for a learned relation $\to_\pi$.
First, the evaluation that corresponds to $\sigma_\Loop$ must also be possible with $\to_\pi$.
This ensures that \Cref{alg:overview} makes ``progress'', i.e., that we do not unroll the same loop with the same model again.
Second, the following set must be finite:
\[
  \{\pi \mid \sigma_\Loop \text{ is a model of } \tau_\Loop, {\to_\pi} \text{ is learned from $\tau_\Loop$ and $\sigma_\Loop$}\}
\]
So TRL only learns finitely many relations from a given loop $\tau_\Loop$.
While TRL may diverge (\Cref{remark:termination}), this ``usually'' ensures termination in practice.

Apart from these restrictions, we have lots of freedom when computing learned relations, as ``enlarging'' $\tau$ (i.e., adding disjuncts to $\tau$) is \emph{always} sound for proving safety.
The \emph{transitive projection} that we use to learn relations (see \Cref{sec:rec}) heavily exploits this freedom.
It modifies the recurrence analysis from \cite{kincaid15} by replacing expensive operations -- convex hulls and polyhedral projections -- by a cheap variation of \emph{model based projection} \cite{spacer}.
While convex hulls and polyhedral projections are over-approximations (for integer arithmetic), model based projections under-approximate.
This is surprising at first, but the justification for using under-approximations is that, as mentioned above, ``enlarging'' $\tau$ is always sound.

Without our modifications, recurrence analysis over-approximates, so we ``mix'' over- and under-approximations.
Thus, learned relations are \emph{not} under-approximations, so \Cref{alg:overview} cannot prove unsafety and returns $\unknown$ in \Cref{alg1:fail}.

Learned relations may reduce the diameter, but computing the diameter is difficult.
Instead, TRL adds \emph{blocking clauses} to the SMT encoding that force the SMT solver to prefer learned relations over loops (\Cref{alg1:block}).
Then unsatisfiability implies that the diameter has been reached, so that $\safe$ can be returned (\Cref{alg1:safe}).

For our example \eqref{eq:count}, once \eqref{eq:learned} has been learned, it is preferred over \eqref{eq:count}.
As \eqref{eq:learned} must not be used twice in a row (\Cref{alg1:trans}), the SMT problem in \Cref{alg1:safe} becomes unsatisfiable after adding a blocking clause for $s = \ell = 1$ (that blocks \eqref{eq:count} for the $1^{st}$ step), and one for $s = 2$ and $\ell = 1$ (that blocks \eqref{eq:count} for the $2^{nd}$ step).
Since we check for reachability of error states after every step (\Cref{alg1:fail}), this implies safety.

\subsubsection*{Outline} After introducing preliminaries in \Cref{sec:preliminaries}, we present our new algorithm
TRL in \Cref{sec:til}.
As TRL builds upon \emph{transitive projections}, we show how to implement such a
projection for linear integer arithmetic in \Cref{sec:rec}.
In \Cref{sec:Unsafety}, we adapt TRL to prove unsafety.
\Cref{sec:related} discusses related work and we evaluate our approach empirically
in \Cref{sec:experiments}. All proofs can be found in \Cref{sec:proofs}.

}
\section{Preliminaries}
\label{sec:preliminaries}

We assume familiarity with basics of first-order logic\cite{enderton}.
$\VV$ is a countably infinite set of variables and $\AA$ is a first-order theory with signature $\Sigma$ and carrier $\CC$.
For each entity $e$, $\VV(e)$ is the set of variables that occur in $e$.
$\QF(\Sigma)$ denotes the set of all quantifier-free first-order formulas over $\Sigma$, and $\QF_\land(\Sigma)$ only contains conjunctions of $\Sigma$-literals.
$\top$ and $\bot$ stand for ``true'' and ``false'', respectively.

Given $\psi \in \QF(\Sigma)$ with $\VV(\psi) = \vec{y}$, we say that $\psi$ is $\AA$-\emph{valid} (written $\models_\AA \psi$) if every model of $\AA$ satisfies the universal closure $\forall \vec{y}.\ \psi$ of $\psi$.
A partial function $\sigma: \VV \partial \CC$ is called a \emph{valuation}.
If $\VV(\psi) \subseteq \dom(\sigma)$ and $\models_\AA \sigma(\psi)$, then $\sigma$ is an $\AA$-\emph{model} of $\psi$ (written $\sigma \models_\AA \psi$).
Here, $\sigma(\psi)$ results from $\psi$ by instantiating all variables according to $\sigma$.
If $\psi$ has an $\AA$-model, then $\psi$ is $\AA$-\emph{satisfiable}.
If $\sigma(\psi)$ is $\AA$-satisfiable (but not necessarily $\VV(\psi) \subseteq \dom(\sigma)$), then we say that $\psi$ is $\AA$-\emph{consistent} with $\sigma$.
We write $\psi \models_\AA \psi'$ for $\models_\AA \psi \implies \psi'$, and $\psi \equiv_\AA \psi'$ means $\models_\AA \psi \iff \psi'$.
In the sequel, we omit the subscript $\AA$, and we just say ``valid'', ``model'', ``satisfiable'', and ``consistent''.
We assume that $\AA$ is complete (i.e., $\models \psi$ or $\models \neg \psi$ holds for every closed formula over $\Sigma$) and that $\AA$ has an effective quantifier elimination procedure (i.e., quantifier elimination is computable).

We write $\vec{x}$ for sequences and $x_i$ is the $i^{th}$ element of $\vec{x}$, where $x_1$ denotes the first element.
We use ``$\concat$'' for concatenation of sequences, where we identify sequences of length $1$ with their elements, so, e.g., $x\concat\vec{x} = [x]\concat\vec{x}$.

Let $d \in \NN$ be fixed, and let $\vec{x},\vec{x}' \in \VV^d$ be disjoint vectors of pairwise different variables, called the \emph{pre-} and \emph{post-variables}.
All other variables are \emph{extra variables}.
Each $\tau \in \QF(\Sigma)$ induces a \emph{transition relation} $\to_\tau$ on \emph{states}, i.e., elements of $\CC^d$, where $\vec{v} \to_\tau \vec{v}'$ iff $\tau[\vec{x}/\vec{v},\vec{x}'/\vec{v}']$ is satisfiable.
Here, $[\vec{x}/\vec{v},\vec{x}'/\vec{v}']$ maps
$x^{(\prime)}_i$ to $v^{(\prime)}_i$.

We call $\tau \in \QF(\Sigma)$ a \emph{relational formula} if we are interested in $\tau$'s induced transition relation.
\emph{Transitions} are conjunctive relational formulas without extra variables (i.e., conjunctions of literals over pre- and post-variables).
We sometimes identify $\tau$ with $\to_\tau$, so we may call $\tau$ a relation.

A \emph{$\tau$-run} is a sequence $\vec{v}_1 \to_\tau \ldots \to_\tau \vec{v}_k$.
A \emph{safety problem} $\TT$ is a triple $(\psi_{\init}, \tau,\psi_{\err}) \in \QF(\Sigma) \times \QF(\Sigma) \times \QF(\Sigma)$ where $\VV(\psi_{\init}) \cup \VV(\psi_{\err}) \subseteq \vec{x}$.
 It is \emph{unsafe} if there are $\vec{v},\vec{v}' \in \CC^d$ such that  $[\vec{x} / \vec{v}] \models \psi_\init$, $\vec{v} \to^*_\tau \vec{v}'$, and $[\vec{x} / \vec{v}'] \models \psi_\err$.

Throughout the paper, we use $c,d,e,k,\ell,s$ for integer constants (where $d$ always
denotes the size of $\vec{x}$, and $s$ and $\ell$ always denote the start and length of a
loop\report{ as in \Cref{alg1:loop} of \Cref{alg:overview}}); $\vec{v}$ for states;
$w,x,y$ for 
variables; $\tau,\pi$ for relational formulas; $\sigma,\theta$ for valuations; and $\mu$
for variable renamings.

\begin{algorithm}[t]
  $b \gets 0; \quad \vec{\pi} \gets [\tau]; \quad \blocked \gets \emptyset$\; \nllabel{alg:init}
  $\add(\mu_{1}(\psi_\init))$ \tcp*{encode the initial states}
  \While(\tcp*[f]{main loop}){$\top$}{
    $b\increment; \quad \push(); \quad \add(\mu_{b}(\psi_\err))$ \tcp*{encode the error states} \nllabel{alg:err1}
    \leIf{$\checksat()$}{
      \Return{$\unknown$} \nllabel{alg:err2}
    }{
      $\pop()$ \tcp*[f]{check their reachability\hspace{-.9em}}
    }
    $\push()$ \tcp*{add backtracking point}
    \lIf(\tcp*[f]{encode transitivity}){$b>1$}{$\add(\ind[\id]{x}{b} \doteq 1 \lor
      \ind[\id]{x}{b} \not\doteq \ind[\id]{x}{b-1})$} \nllabel{alg:trans}
    $\add(\mu_{b}(\bigvee_{n=1}^{|\vec{\pi}|} (\pi_n \land x_\id \doteq n)))$ \tcp*{encode
      $\to_\tau$ and learned relations} \nllabel{alg:unroll}
    $\add(\bigwedge_{(b,\pi) \in \blocked} \pi)$ \tcp*{add blocking clauses for this $b$} \nllabel{alg:block1}
    \lIf{$\neg\checksat()$}{
      \Return{$\safe$} \nllabel{alg:safe} \tcp*[f]{check if the search space is exhausted}}
    $\sigma \gets \getmodel(); \quad \vec{\tau} \gets \trace_b(\sigma,\vec{\pi})$ \nllabel{alg:trace} \tcp*{build trace from current model}
    \If(\tcp*[f]{search loop}){$[\tau_s,\ldots,\tau_{s+\ell-1}]$ is a loop \nllabel{alg:loop}}{
      $\sigma_{\Loop} \gets [x / \sigma(\mu_{s,\ell}(x)) \mid x \in \vec{x} \cup \vec{x}']$ \nllabel{alg:model}\tcp*{build the valuation for the loop}
      \If(\tcp*[f]{redundancy check}){no $\pi \in \tail(\vec{\pi})$ is consistent with $\sigma_{\Loop}$ \nllabel{alg:redundant}}{
        $\tau_{\Loop} \gets \mu_{s,\ell}^{-1}(\bigwedge_{i=s}^{s+\ell-1} \mu_{i}(\tau_i))$ \nllabel{alg:learn1} \tcp*{build the loop}
        $\vec{\pi} \gets \vec{\pi}\concat\tip(\tau_{\Loop}, \sigma \circ \mu_{s,\ell})$ \nllabel{alg:learn2} \tcp*[f]{learn relation}
      }
      $\text{let } \pi \in \tail(\vec{\pi}) \text{ and } \overline{\sigma} \supseteq
      \sigma_{\Loop}$ s.t.\ $\overline{\sigma} \models
\pi$ \tcp*{pick suitable learned relation} \nllabel{alg:pick}
      $\blocked.\add(s+\ell-1,\blockingclause(s,\ell,\pi,\overline{\sigma}))$ \tcp*{block the loop} \nllabel{alg:block2}
      \lWhile{$b \geq s$}{$\{ \, \pop(); \ b\decrement \,\}$ \tcp*[f]{backtrack to the start
          of the loop}\nllabel{alg:backtrack}}
    }
  }
  \caption{TRL -- Input: a safety problem $\TT = (\psi_\init,\tau,\psi_\err)$}
  \label{alg}
\end{algorithm}

\section{Transitive Relation Learning}
\label{sec:til}
In this section, we present our novel model checking algorithm \emph{Transitive Relation
Learning} (TRL) in detail,  see \Cref{alg}.
Here, for all
$i,j \in \NN_+ = \NN \setminus \{0\}$
we define $\mu_{i,j}(x') \Def \ind{x}{i+j}$ if $x' \in \vec{x}'$ and $\mu_{i,j}(x) \Def \ind{x}{i}$, otherwise.
So in particular, we have $\mu_{i,j}(\vec{x}) = \ind{\vec{x}}{i}$ and $\mu_{i,j}(\vec{x}') = \ind{\vec{x}}{i+j}$, where we assume that $\ind{\vec{x}}{1},\ind{\vec{x}}{2}, \ldots \in \VV^d$ are disjoint vectors of pairwise different fresh variables.
Intuitively, the variables $\ind{\vec{x}}{i}$ represent the $i^{th}$ state in a run, and applying $\mu_{i,j}$ to a relational formula yields a formula that relates the $i^{th}$ and the $(i+j)^{th}$ state of a run.
For convenience, we define $\mu_{i} \Def \mu_{i,1}$ for all $i \in \NN_+$, i.e., $\mu_i(\vec{x}) = \ind{\vec{x}}{i}$ and $\mu_i(\vec{x}') = \ind{\vec{x}}{i+1}$.
As in SMT-based BMC, TRL uses an incremental SMT solver to unroll the transition relation step by step (\Cref{alg:unroll}), but in contrast to BMC, TRL infers \emph{learned relations} on the fly (\Cref{alg:learn2}).
The \emph{input formula} $\tau$ as well as all learned relations are stored in $\vec{\pi}$.
Before each unrolling, we set a backtracking point with the command $\push$ and add a suitably variable-renamed version of the description of the error states to the SMT problem (i.e., to the state of the underlying SMT solver) in \Cref{alg:err1}.
Then the command $\checksat$ checks for reachability of error states, and the command $\pop$ removes all formulas from the SMT problem that have been added since the last invocation of $\push$ (\Cref{alg:err2}), i.e., it removes the encoding of the error states (unless the check succeeds, so that TRL fails).
For each unrolling, suitably variable renamed variants of $\vec{\pi}$'s elements are added to the underlying SMT problem with the command $\add$ in \Cref{alg:unroll}.
If no error state is reachable after $b-1$ steps, but the transition relation cannot be
unrolled $b$ times (i.e., the SMT problem that corresponds to the $b$-fold unrolling is
unsatisfiable), then the diameter of the analyzed system (including learned relations) has
been reached, and hence safety has been proven (\Cref{alg:safe}).

The remainder of this section is structured as follows:
First, \Cref{sec:basics} introduces \emph{conjunctive variable projections} that
 are used to
compute the \emph{trace} (\Cref{alg:trace} of \Cref{alg}). 
Next, \Cref{sec:loops} defines \emph{loops} and
discusses how to find \emph{non-redundant loops} that are suitable for learning new relations (\Cref{alg:loop,alg:model,alg:redundant,alg:learn1}).
Then, \Cref{sec:transitiveProjections} introduces \emph{transitive projections}
that are used to learn relations (\Cref{alg:trans,alg:learn2}).
\report{Afterwards}\paper{Finally}, \Cref{sec:block} presents
\emph{blocking clauses}, which ensure that learned re\-la\-tions are preferred over other (sequences of) transitions
(\Cref{alg:block1,alg:pick,alg:block2,alg:backtrack}).
\report{Finally, we illustrate \Cref{alg} with
a complete example in \Cref{sec:example}.}

\subsection{Conjunctive Variable Projections and Traces}
\label{sec:basics}

To decide when to learn a new relation, TRL inspects the \emph{trace} (Lines \ref{alg:trace} and \ref{alg:loop}).
The trace is a sequence of transitions induced by the formulas that have been added to the SMT problem while unrolling the transition relation, and by the current model (\Cref{alg:trace}).
To compute them,  we use \emph{conjunctive variable projections}, which are like
\emph{model based projections} \cite{spacer}, but always 
yield conjunctions.%
\begin{definition}[Conjunctive Variable Projection]
  \label{def:projections}
  A function
  \[
    \mbip: \QF(\Sigma) \times (\VV \partial \CC) \times 2^{\VV} \to \QF(\Sigma)
  \]
  is called a
  \emph{conjunctive variable projection} if 

  \vspace{-0.3em}
  \noindent
  \begin{minipage}[t]{0.49\textwidth}
    \begin{enumerate}
    \item $\sigma \models \mbip(\tau,\sigma,X)$
    \item $\mbip(\tau,\sigma,X) \models \tau$
    \item $\{\mbip(\tau,\theta,X) \mid \theta \models \tau\}$ is finite
    \end{enumerate}
  \end{minipage}
  \begin{minipage}[t]{0.49\textwidth}
    \begin{enumerate}
      \setcounter{enumi}{3}
    \item $\VV(\mbip(\tau,\sigma,X)) \subseteq X \cap \VV(\tau)$
    \item $\mbip(\tau,\sigma,X) \in \QF_\land(\Sigma)$
    \end{enumerate}
  \end{minipage}
  \medskip

  \noindent
  for all $\tau \in \QF(\Sigma)$, $X \subseteq \VV$, and $\sigma \models
  \tau$.
  We abbreviate $\mbip(\tau,\sigma,\vec{x} \cup \vec{x}')$ by $\mbip(\tau,\sigma)$.
\end{definition}
So like model based projection, $\mbip$ under-approximates quantifier elimination by projecting to the variables $X$ (by (2) and (4)).
To do so, it implicitly performs a finite case analysis (by (3)), which is driven by
the model $\sigma$ (by (1)).
In contrast to model based projections, $\mbip$ always yields conjunctions (by (5)).
Note that by (1),
$\mbip(\tau,\sigma,X)$ may only contain variables from $\dom(\sigma)$.

\begin{remark}[$\mbip$ and $\mathsf{mbp}$]
  Conjunctive variable projections are obtained by combining a model based projection $\mathsf{mbp}$ (which satisfies \Cref{def:projections} (1--4)) with \emph{syntactic implicant projection} $\sip$ \cite{adcl}.
  To see how to compute $\mathsf{mbp}$ for linear integer arithmetic, recall that Cooper's
  method for quantifier elimination \cite{cooper72} essentially maps
    $\exists \vec{y}.\ \tau$ to a disjunction of formulas of the form
  $\tau[\vec{y}/\vec{t}]$, where the variables $\vec{y}$ do not occur in the terms $\vec{t}$.
  Instead, $\mathsf{mbp}$ just computes one of these disjuncts, which is satisfied by the provided model.
  For $\sip(\tau,\sigma)$, one computes the conjunction of all literals of $\tau$'s negation normal form that are satisfied by $\sigma$.
  Then $\mbip(\tau,\sigma) \Def \sip(\mathsf{mbp}(\tau,\sigma),\sigma)$.
\end{remark}

\begin{remark}[$\mbip$ and Quantifier Elimination]
  \Cref{def:projections} (1--4) imply
  \begin{align*}
    \exists \vec{y}.\ \tau & {} \mbox{$\; \equiv \; \bigvee_{\sigma \models \tau} \mbp(\tau,\sigma)$}
     \label{mbp-property} \qquad \text{where $\vec{y}$ are $\tau$'s extra variables.}
\end{align*}
  So $\mbp$ yields
  a quantifier elimination procedure $\mathsf{qe}$ which maps $\exists
\vec{y}.\ \tau$ to $\mathit{res}$:

\vspace{-.9em}

\[\mbox{\small $\mathit{res} \gets \bot;\hspace{.75em}$ \textbf{while}
  $\tau$ \emph{has a model}
  $\sigma\;$ $\{\mathit{res} \gets \mathit{res} \lor \mbp(\tau,\sigma);\hspace{.75em} \tau
    \gets \tau \land \neg \mbp(\tau,\sigma)\}$} 
  \]

\noindent
But for a single model $\sigma$, $\mbp(\tau,\sigma)$ under-ap\-prox\-i\-mates quantifier elimination.
\end{remark}
So $\mbip(\tau,\sigma)$ just computes one disjunct of $\mathsf{qe}(\exists \vec{y}.\ \tau)$ which is satisfied by $\sigma$.
However, like model based projection, $\mbip$ can be implemented efficiently for many theories with effective, but very expensive quantifier elimination procedures.%
\begin{example}[$\mbip$]
  \label{ex:projections}
  Consider the following formula $\tilde{\tau}$:
  \small
  \[
    \begin{array}{rcl}
     ( (w \doteq 0 \land \ind{x}{2} \doteq x\!+\!1 \land \ind{y}{2} \doteq y\!+\!1)  &
      \lor &  (\ind{w}{2} \doteq w \land w \doteq 1 \land \ind{x}{2} \doteq x\!-\!1 \land
      \ind{y}{2} \doteq y\!-\!1) ) \land {} \\
      ( (\ind{w}{2} \doteq 0 \land x' \doteq \ind{x}{2}\!+\!1 \land y' \doteq
      \ind{y}{2}\!+\!1)  & \lor &  (w' \doteq \ind{w}{2} \land \ind{w}{2} \doteq 1 \land x'
      \doteq \ind{x}{2}\!-\!1 \land y' \doteq \ind{y}{2}\!-\!1) )
    \end{array}
  \]
  \normalsize It encodes two steps with \Cref{ex:ex1},
  where $\ind{\vec{x}}{2} = [\ind{w}{2},\ind{x}{2},\ind{y}{2}]$ represents the values after one
  step.
  In \Cref{alg:trace}, \Cref{alg} might find a run like $\sigma(\ind{\vec{x}}{1}) \to_\tau \sigma(\ind{\vec{x}}{2}) \to_\tau \sigma(\ind{\vec{x}}{3})$ for
  \[
  \begin{array}{rcl@{\;\;}l@{\;\;}l}
  \sigma & \Def  & [\ind{w}{1}/\ind{x}{1}/\ind{y}{1}/0, & \ind{w}{2}/\ind{x}{2}/\ind{y}{2}/1, & \ind{w}{3}/1,\ind{x}{3}/\ind{y}{3}/0].
  \end{array}
  \]
 Here, $[w/x/y/c, \ldots]$ abbreviates $[w/c, x/c, y/c, \ldots]$.
  Then the variable renaming $\mu_{1,2}$ allows us to
instantiate the pre- and post-variables
by the first and last state,
resulting in the following model of
$\tilde{\tau}$:
   \[
    \tilde{\sigma} \Def \sigma \circ \mu_{1,2} = \sigma \cup [w/x/y/0,\;\; w'/1,x'/y'/0] \quad \text{where } (\sigma \circ \mu_{1,2})(x) = \sigma(\mu_{1,2}(x))
  \]
  To get rid of $\ind{w}{2},\ind{x}{2},\ind{y}{2}$, one could compute $\mathsf{qe}(\exists \ind{w}{2},\ind{x}{2},\ind{y}{2}.\ \tilde{\tau})$, resulting in:
  \begin{align}
    & w \doteq 0 \land x' \doteq x+2 \land y' \doteq y+2 \tag{\ensuremath{\inc}} \\
    {} \lor {} & w \doteq 0 \land w' \doteq 1 \land x' \doteq x \land y' \doteq y \label{eq:mbp-ex} \tag{\ensuremath{\mathsf{eq}}}\\
    {} \lor {} &w \doteq 1 \land w' \doteq 1 \land x' \doteq x-2 \land y' \doteq y-2. \tag{\ensuremath{\dec}}
  \end{align}
  Instead, we may have $\mbp(\tilde{\tau},\tilde{\sigma}) = \eqref{eq:mbp-ex}$, as
  $\tilde{\sigma} \models \eqref{eq:mbp-ex}$.
\end{example}
Intuitively, a relational formula $\tau$ describes how states can change, so it is composed of many different cases.
These cases may be given explicitly (by disjunctions) or implicitly (by
extra variables, which express non-determinism).
Given a model $\sigma$ of $\tau$ that describes a \emph{concrete} change of state, $\mbip$ computes a description of the corresponding case.
Computing \emph{all} cases amounts to eliminating all extra variables and converting the result to DNF, which is impractical.

When unrolling the transition relation in \Cref{alg:unroll} of \Cref{alg}, we identify
each relational formula $\pi_n$ with its index $n$ in the sequence $\vec{\pi}$.
To this end, we use a fresh variable $x_\id$, and our SMT encoding forces
$\ind[\id]{x}{i}$ to be the identifier of the relation that is used for the $i^{th}$ step.
Similarly to \cite{abmc}, the \emph{trace} is the sequence of transitions that results from applying $\mbip$ to the unrolling
of the transition relation that is constructed by \Cref{alg} in \Cref{alg:unroll}.
So a trace is a sequence of transitions that can be applied subsequently, starting in an initial state.
\begin{definition}[Trace]
  \label{def:trace}
  Let $\vec{\pi}$ be a sequence of relational formulas, let
  \begin{align}
    \label{eq:trace}
    \paper{\textstyle}
    \sigma \models \bigwedge_{i=1}^{b} \mu_{i}\left(\bigvee_{n=1}^{|\vec{\pi}|} (\pi_n \land x_\id \doteq n)\right) \qquad \text{where $b \in \NN_+$},
  \end{align}
  and let $\id(i) \Def \sigma(\ind[\id]{x}{i})$.
  Then the \emph{trace induced by $\sigma$} is
  \[
    \trace_b(\sigma,\vec{\pi}) \Def [\mbip(\pi_{\id(i)}, \sigma \circ \mu_{i})]_{i=1}^{b}.
  \]
\end{definition}

Recall that $\mu_i$ renames $\vec{x}$ and $\vec{x}'$ into $\ind{\vec{x}}{i}$ and
$\ind{\vec{x}}{i+1}$, and $\id(i) = \sigma(\ind[\id]{x}{i})$ is the
index
of the
relation from $\vec{\pi}$ that is used for the $i^{th}$ step.
So each model $\sigma$ of \eqref{eq:trace} corresponds to a run
$\sigma(\mu_1(\vec{x}))
\to_{\pi_{\id(1)}} \ldots \to_{\pi_{\id(b-1)}} \sigma(\mu_{b}(\vec{x})) \to_{\pi_{\id(b)}} \sigma(\mu_{b}(\vec{x}'))$, and the trace
induced by $\sigma$ contains the transitions that were used in this run.

\begin{example}[Trace]
  \label{ex:trace}
  Consider the extension of $\sigma$ from \Cref{ex:projections} with $[\ind[\id]{x}{1}/1,\;
    \ind[\id]{x}{2}/1]$:
  \[
  \sigma \Def [\ind{w}{1}/\ind{x}{1}/\ind{y}{1}/0,\ind[\id]{x}{1}/1, \quad
    \ind{w}{2}/\ind{x}{2}/\ind{y}{2}/\ind[\id]{x}{2}/1, \quad \ind{w}{3}/1,\ind{x}{3}/\ind{y}{3}/0]
  \]
  Thus, $\id(1) = \sigma(\ind[\id]{x}{1}) = 1$, $\id(2) =  \sigma(\ind[\id]{x}{2}) = 1$, and
  $\pi_{\id(1)} = \pi_{\id(2)} =
  \pi_1 = \tau$.
  Then
  \begin{align*}
    & \trace_2(\sigma, [\tau,\tau]) = [\mbip(\tau,\sigma \circ \mu_{1}), \mbip(\tau,\sigma \circ \mu_{2})]           \\
    {} = {} & [\mbip(\tau,[w/x/y/0, \; w'/x'/y'/1]), \mbip(\tau,[w/x/y/1, \; w'/1,x'/y'/0])] = [\tau_\inc, \tau_\dec].
  \end{align*}
\end{example}

\subsection{Loops}
\label{sec:loops}

As $\vec{\pi}$ only gives rise to finitely many transitions, the trace is bound to contain \emph{loops}, eventually (unless \Cref{alg} terminates beforehand).
\begin{definition}[Loop]
  A sequence of transitions $\tau_1,\ldots,\tau_k$ is called a \emph{loop} if there are $\vec{v}_0,\ldots,\vec{v}_{k+1} \in \CC^d$ such that $\vec{v}_0 \to_{\tau_1} \ldots \to_{\tau_k} \vec{v}_{k} \to_{\tau_1} \vec{v}_{k + 1}$.
\end{definition}
Intuitively, these loops are the reason why BMC may diverge.
To prevent divergence, TRL learns a new relation when a loop is detected (\Cref{alg:loop}).
\begin{remark}[Finding Loops]
  \label{remark:finding-loops}
  Loops can be detected by SMT solving.
A cheaper way is to look for duplicates on the trace, but then loops are found ``later'', as
a trace $[\ldots, \pi, \pi, \ldots]$ is needed to detect a loop $\pi$, but one occurrence of $\pi$ is insufficient.
As a trade-off between precision and efficiency, our im\-ple\-men\-ta\-tion uses an approximation based on \emph{dependency graphs} \cite{abmc}.
More precisely, our implementation maintains a graph $\mathcal{G}$ whose nodes are
transitions, and it adds an edge between two transitions if they occur subsequently on the
trace at some point.
Then it considers a sub\-sequence $\tau_i,\ldots,\tau_j$ of the trace to be a potential loop if
$\mathcal{G}$ contains an edge from $\tau_j$ to $\tau_i$.
\end{remark}

\begin{remark}[Disregarding ``Learned'' Loops]
  \label{remark:loops}
    One should disregard ``loops'' consisting of a single \emph{learned transition}, i.e.,
    a transition that results from applying $\mbip$ to some $\pi \in \tail(\vec{\pi})$,
    where $\tail(\tau\concat\vec{\pi}') \Def \vec{\pi}'$.
    Here, $\tail(\vec{\pi})$ contains all learned relations, as the first element of $\vec{\pi}$ is the input formula $\tau$.
    The reason is that our goal is to deduce transitive relations, but learned relations are already transitive.
    In the sequel, we assume that the check in \Cref{alg:loop} fails for such loops.
\end{remark}
If there are several choices for $s$ and $\ell$ in \Cref{alg:loop}, then our implementation only considers loops of minimal length and, among those, it minimizes $s$.
\begin{example}[Detecting Loops]
  \label{ex:loops}
  Consider the following model for $\tau$ from \Cref{ex:ex1}.
  \[
    \sigma \Def [\ind{w}{1}/\ind{x}{1}/\ind{y}{1}/0, \ind[\id]{x}{1}/1, \quad \ind{w}{2}/0,\ind{x}{2}/\ind{y}{2}/1]
  \]
   Then $\trace_1(\sigma, [\tau]) = [\tau_\inc]$.
   As $\tau_\inc$ is a loop,\footnote{Depending on the technique that is used to detect
   loops, an actual implementation might require one more unrolling of $\to_{\tau}$
   to obtain the trace $[\tau_\inc,\tau_\inc]$ in order to detect the loop
   $\tau_\inc$, see \Cref{remark:finding-loops}.}
     this causes TRL to learn a relation like the following one
 (see \Cref{sec:transitiveProjections} for details).
 \begin{align*}
  w \doteq 0 \land x' > x \land x' - x \doteq y' - y
  \tag{$\tau^+_{\inc}$}
  \end{align*}
\end{example}
TRL only learns relations from loops that are \emph{non-redundant} w.r.t.\ all relations that have been learned before \cite{adcl}.
\begin{definition}[Redundancy]
  \label{def:redundancy}
  If ${\to_\tau} \subseteq {\to_{\tau'}}$, then $\tau$ is \emph{redundant} w.r.t.\ $\tau'$.
\end{definition}
\begin{example}
  The relation $\tau_\inc$ is redundant w.r.t.\ $\tau^+_\inc$, but $\tau_\dec$ is not.
\end{example}
\Cref{alg:redundant} uses a sufficient criterion for non-redundancy:
If all learned relations are falsified by the values before and after the loop, then
$\tau_s, \ldots, \tau_{s + \ell -1}$ cannot be simulated by a previously learned relation, so it is non-redundant and we learn a new relation.
The values before and after the loop are obtained from the current model $\sigma$ by
setting $\vec{x}$ to $\sigma(\ind{\vec{x}}{s})$ and $\vec{x}'$ to
$\sigma(\ind{\vec{x}}{s+\ell})$, i.e., we use $\sigma \circ \mu_{s,\ell}$ in
\Cref{alg:model}.

To learn a new relation,
we first compute the  relation
\begin{equation}
  \label{eq:loop-formula}
  \paper{\textstyle}
  \tau_\Loop \Def \mu_{s,\ell}^{-1}(\phi_\Loop) \qquad \text{where} \qquad \phi_\Loop \Def
  \bigwedge_{i=s}^{s+\ell-1} \mu_{i}(\tau_i) 
\end{equation}
of the loop in \Cref{alg:learn1}, where $\mu^{-1}_{s,\ell}$ is the inverse of $\mu_{s,\ell}$.
So in \Cref{ex:loops}, we have $\sigma \circ \mu_{1,1} \supseteq [w/x/y/0,\,
  w'/0,x'/y'/1]$ and $\tau_\Loop \Def \mu^{-1}_{1,1}(\mu_1(\tau_\inc)) = \tau_\inc$ as $s = \ell = 1$.
So $\sigma \circ \mu_{s,\ell}$ indeed corresponds to one evaluation of the loop, as $\sigma \circ \mu_{s,\ell} \models \tau_\Loop$.

To see that $\tau_\Loop$ is also the desired relation in general, note that $\phi_\Loop$ is the conjunction of the transitions that constitute the loop, where all variables are renamed as in \Cref{alg:unroll} of \Cref{alg}, i.e., in such a way that the post-variables of the $i^{th}$ step are equal to the pre-variables of the $(i+1)^{th}$ step.
So we have $\sigma \models \phi_\Loop$ and thus $\sigma \circ \mu_{s,\ell} \models \tau_\Loop$.
Hence,
we can use $\tau_\Loop$ and $\sigma \circ \mu_{s,\ell}$ to learn a new relation via
so-called \emph{transitive projections}
in \Cref{alg:learn2}.

\subsection{Transitive Projections}
\label{sec:transitiveProjections}

We now define \emph{transitive projections} that approximate transitive closures of loops.
As explained in \Cref{sec:overview}, we do not restrict ourselves to under- or over-approximations, but we allow ``mixtures'' of both.
Analogously to $\mbip$, transitive projections perform a finite case analysis that is
driven by the provided model $\sigma$.

\begin{definition}[Transitive Projection]
  \label{def:ti}
  A function
  \[
    \tip:  \QF(\Sigma) \times (\VV \partial \CC)  \to \QF(\Sigma)
  \]
  is called a \emph{transitive projection}
  if the following holds for all transitions $\tau \in \QF(\Sigma)$ and all $\sigma
  \models \tau$:

  \vspace*{-.15cm}

  \noindent  
  \begin{minipage}[t]{0.49\textwidth}
    \begin{enumerate}
    \item $\tip(\tau,\sigma)$ is consistent with $\sigma$
    \item $\{\tip(\tau,\theta) \mid \theta \models \tau\}$ is finite
    \end{enumerate}
  \end{minipage}
  \begin{minipage}[t]{0.49\textwidth}
    \begin{enumerate}
      \setcounter{enumi}{2}
    \item $\to_{\tip(\tau,\sigma)}$ is transitive
    \end{enumerate}
  \end{minipage} 
\end{definition}

Clearly, the specifics of $\tip$ depend on the underlying theory.
Our implementation of $\tip$ for quantifier-free linear integer arithmetic is explained in \Cref{sec:rec}.

\begin{example}[$\tip$]
  \label{ex:Transition Invariants}
  For \Cref{ex:ex1}, $\tau_\ti \Def x' - x \doteq y' - y$ over-approximates the transitive
  closure $\to^+_\tau$.
  Such over-approximations are also called \emph{transition invariants} \cite{transition_invariants}.
  With $\tau_\ti$, one can prove safety for any $\psi_\init$ with $\psi_\init \models x \doteq y$, as then $\psi_\init \land \tau_\ti \models x' \doteq y'$, which shows that no error state with $w \doteq 1 \land x \leq 0 \land y > 0$ is reachable.

  By using $\mbip$,
  TRL instead considers $\tau_\inc$ and $\tau_\dec$ separately and learns
  \begin{align*}
    \tip(\tau_\inc,\sigma_\inc) & {} \Def w \doteq 0 \land x' > x \land x' - x \doteq y' - y                   \tag{$\tau^+_{\inc}$} \\
    \tip(\tau_\dec,\sigma_\dec) & {} \Def w' \doteq w \land w \doteq 1 \land x' < x \land x' - x \doteq y' - y \tag{$\tau^+_{\dec}$}
  \end{align*}
  if $\sigma_\inc \models \tau_\inc$ and $\sigma_\dec \models \tau_\dec$.
  In this way, \Cref{alg} can learn disjunctive relations like $\tau^+_{\inc} \lor \tau^+_{\dec}$, even if $\tip$ only yields conjunctive relational formulas (which is true for our current implementation of $\tip$ -- see \Cref{sec:rec} -- but not enforced by \Cref{def:ti}).
\end{example}
In contrast to conjunctive variable projections, $\tip(\tau,\sigma)$ may contain extra variables that do not occur in $\tau$ (which will be exploited in \Cref{sec:rec}).
Hence, instead of $\sigma \models \tip(\tau,\sigma)$ we require consistency with $\sigma$, i.e., $\sigma(\tip(\tau,\sigma))$ must be satisfiable.

\begin{remark}[Properties of $\tip$]
  \label{remark:properties-tip}
 Due to \Cref{def:ti} (1), our definition of $\tip$ implies
 \[
  \paper{\textstyle}
  \tau \models \exists \vec{y}. \, \bigvee_{\sigma \models \tau} \tip(\tau,\sigma), \quad \text{and thus,}
  \quad
  {\to_{\tau}} \subseteq \bigcup_{\sigma \models \tau}
  {\to_{\tip(\tau,\sigma)}},
\]
where $\vec{y}$ are the extra variables of $\bigvee_{\sigma \models \tau} \tip(\tau,\sigma)$.
However, \Cref{def:ti} does \emph{not} ensure
${\to^+_\tau} \subseteq \bigcup_{\sigma \models \tau}
  {\to_{\tip(\tau,\sigma)}}$.
  So there is no guarantee that $\tip$ covers $\to^+_\tau$ entirely, i.e., $\tip$ cannot be used to compute transition invariants, in general.
  \Cref{def:ti} does not ensure ${\to^+_\tau} \supseteq \bigcup_{\sigma \models \tau}
  {\to_{\tip(\tau,\sigma)}}$ either, as $\tip(\tau,\sigma)$ does not imply $\sigma(\vec{x}) \to^+_\tau \sigma(\vec{x}')$.
\end{remark}

\begin{example}
  \label{Counterex-tip}
  To see that $\tip$ computes no over- or under-approximations, let
  \[
    \tau \Def x' \doteq x + 1 \land y' \doteq y + x.
  \]
  Then for all $\sigma \models \tau$, we might have:
  \[
    \tip(\tau,\sigma) =
    \begin{cases}
      x \geq 0 \land x' > x \land y' \geq y, & \text{if } \sigma(x) \geq 0 \\
      x < 0 \land x' > x \land y' < y,   & \text{if } \sigma(x) < 0
    \end{cases}
  \]
  However, $(x \geq 0 \land x' > x \land y' \geq y) \lor (x < 0 \land x' > x \land y' < y)$ is not an over-approximation of $\to^+_\tau$ (i.e., ${\to^+_\tau} \not\subseteq \bigcup_{\sigma \models \tau}
    {\to_{\tip(\tau,\sigma)}}$), as we have, e.g.,
  \[
    (-1,0) \to_\tau (0,-1) \to_\tau (1,-1) \to_\tau (2,0), \qquad \text{but} \qquad (-1,0) \not\to_{\tip(\tau,\sigma)} (2,0)
  \]
  for all $\sigma \models \tau$.
  Moreover, we also have ${\to^+_\tau} \not\supseteq \bigcup_{\sigma \models \tau} {\to_{\tip(\tau,\sigma)}}$, since
  \[
    (-1,0) \to_{\tip(\tau,\sigma)} (10,-20), \qquad \text{but} \qquad (-1,0)
  \not\to^+_\tau (10,-20)\paper{ \pagebreak[3]} 
  \]
  if $\sigma(x) < 0$.
  In contrast to $\tip(\tau, \sigma)$, linear over-approximations for $\to^+_\tau$ like $x' > x$
  cannot distinguish whether $y$ increases or decreases.
\end{example}

As TRL proves safety via \emph{blocking clauses} (\Cref{sec:block}) that only block steps that are cov\-er\-ed by learned relations, the fact that $\tip$ does not yield over-ap\-prox\-i\-ma\-tions does not affect soundness.
However, it may cause divergence (\Cref{remark:termination}).

Recall that our SMT encoding forces $\ind[\id]{x}{i}$ to be the identifier of the relation that is used for the $i^{th}$ step (\Cref{alg:unroll}).
To exploit transitivity of $\tip$, we add the constraint
$\ind[\id]{x}{b} \doteq 1 \lor \ind[\id]{x}{b} \not\doteq \ind[\id]{x}{b-1}$ in
\Cref{alg:trans}, so that learned relations (with index $>1$)\linebreak are not used several times in a row, since this is unnecessary for transitive relations.

\subsection{Blocking Clauses}
\label{sec:block}

In \Cref{alg:pick}, we are guaranteed to find a learned relation $\pi$ which is consistent
with $\sigma_\Loop$: If our sufficient criterion for non-redundancy in
\Cref{alg:redundant} failed, then the existence of $\pi$ is guaranteed.
Otherwise, we learned a new relation $\pi$ in \Cref{alg:learn2} which is consistent with $\sigma_\Loop \subseteq \sigma \circ \mu_{s,\ell}$ by definition of $\tip$.
Thus, we can use $\pi$ and a model $\overline{\sigma} \supseteq \sigma_\Loop$ of $\pi$ to record a \emph{blocking clause} in \Cref{alg:block2}.
\begin{definition}[Blocking Clauses]
  \label{def:blocking}
  Consider a relational formula $\pi$, and let $\overline{\sigma}$ be a model of $\pi$.
  We define:
  \[
    \blockingclause(s,\ell,\pi,\overline{\sigma}) \Def
    \begin{cases}
      \mu_{s,\ell}(\neg \mbip(\pi, \overline{\sigma})) \lor \ind[\id]{x}{s} > 1, & \text{if } \ell = 1 \\
      \mu_{s,\ell}(\neg \mbip(\pi, \overline{\sigma})),                         & \text{if } \ell > 1
    \end{cases}
  \]
\end{definition}
Here, $s$ and $\ell$ are natural numbers such that $[\tau_i]_{i=s}^{s+\ell-1}$ is a (possibly) redundant loop on the trace.
Blocking clauses exclude models that correspond to runs
\begin{equation}
  \label{blockedRun}
  \vec{v}_1 \to_{\tau_1} \ldots \to_{\tau_{s-1}} \vec{v}_s \to_{\tau_{s}} \ldots \to_{\tau_{s+\ell-1}} \vec{v}_{s+\ell}
\end{equation}
where $\vec{v}_{s} \to_{\pi} \vec{v}_{s+\ell}$. Intuitively, if $\ell = 1$ then
$\blockingclause(s,\ell,\pi,\overline{\sigma})$ states that one may still evaluate 
$\vec{v}_s$ to
$\vec{v}_{s+\ell}$, but one has to use a learned transition. If $\ell > 1$, then
$\blockingclause(s,\ell,\pi,\overline{\sigma})$
states that one may still  evaluate 
$\vec{v}_s$ to
$\vec{v}_{s+\ell}$, but not in $\ell$ steps.
More precisely, 
blocking clauses take into account that%

\vspace{-0.5em}
\noindent
\begin{minipage}{0.44\textwidth}
  \begin{equation}
    \label{prefix}
    \vec{v}_1 \to_{\tau_1} \ldots \to_{\tau_{s+\ell-2}} \vec{v}_{s+\ell-1}
  \end{equation}
\end{minipage}
\begin{minipage}{0.1\textwidth}
  \begin{equation*}
    \text{and}
  \end{equation*}
\end{minipage}
\begin{minipage}{0.44\textwidth}
  \begin{equation}
    \label{unblockedRun}
    \vec{v}_1 \to_{\tau_1} \ldots \to_{\tau_{s-1}} \vec{v}_s \to_{\pi} \vec{v}_{s+\ell}
  \end{equation}
\end{minipage}

\medskip
\noindent
must not be blocked to ensure that $\vec{v}_2,\ldots,\vec{v}_{s+\ell}$ remain reachable.
For the former, note that blocking clauses affect the suffix $\vec{v}_{s} \to_{\tau_s} \ldots \to_{\tau_{s+\ell-1}} \vec{v}_{s+\ell}$ of \eqref{blockedRun} (as they contain $\mu_{s,\ell}(\neg \mbip(\pi, \overline{\sigma}))$), but not \eqref{prefix}, so $\vec{v}_{2},\ldots,\vec{v}_{s+\ell-1}$ remain reachable.

Regarding \eqref{unblockedRun}, note that \eqref{blockedRun} corresponds to one unrolling of the loop (without using the newly learned
relation $\pi$).
In contrast, \eqref{unblockedRun} simulates one unrolling of the loop using the new relation $\pi$.
To see that our blocking clauses prevent the sequence \eqref{blockedRun}, but not the sequence \eqref{unblockedRun}, first consider the case $\ell > 1$.
Then \eqref{unblockedRun} is not affected by the blocking clause, as it requires less than $s+\ell$ steps.
If $\ell = 1$, then the loop that needs to be blocked is a single \emph{original transition} (i.e., a transition that results from applying $\mbip$ to $\tau$) due to \Cref{remark:loops}.
So $\ind[\id]{x}{s} > 1$ is falsified by \eqref{blockedRun}, as $\tau_s$ is an original transition, i.e., using it for the $s^{th}$ step implies $\ind[\id]{x}{s} \doteq 1$.
However, $\ind[\id]{x}{s} > 1$ is satisfied by \eqref{unblockedRun}, as $\pi$ is a learned
transition, so using it implies
$\ind[\id]{x}{s} > 1$.

\begin{remark}[Extra Variables and Negation]
In \Cref{def:blocking}, $\mbip$ is used to project $\pi$ according to the model $\overline{\sigma}$.
In this way, negation has the intended effect, i.e., 
\[
  [\vec{x}/\vec{v},\vec{x}'/\vec{v}'] \models \neg\mbip(\ldots) \qquad \text{iff} \qquad \vec{v} \not\to_{\mbip(\ldots)} \vec{v}',
\]
as $\mbip(\ldots)$ has no extra variables.
To see why this is important here, consider the relation $\tau \Def n > 0 \land x' \doteq x + n$, where $n$ is an extra variable.
Then $0 \to_\tau 1$, but
\[
  \neg\tau[x/0,x'/1] = (n \leq 0 \lor x' \not\doteq x + n)[x/0,x'/1] = n \leq 0 \lor 1 \not\doteq n
\]
is satisfiable, so $\neg\tau$ is not a suitable characterization of $\not\to_\tau$.
The reason is that $n$ is implicitly existentially quantified in $\tau$.
So to characterize $\not\to_\tau$, we have to negate $\exists n.\ \tau$ instead of $\tau$, resulting in $\forall n.\ n \leq 0 \lor x' \not\doteq x + n$.
Then, as desired,
\[
  (\forall n.\ n \leq 0 \lor x' \not\doteq x + n)[x/0,x'/1] = \forall n.\ n \leq 0 \lor 1 \not\doteq n
\]
is invalid.
To avoid quantifiers, we eliminate extra variables via $\mbip$ instead.
\end{remark}

In \Cref{alg:block2}, a pair consisting of $s + \ell - 1$ and the blocking clause is added to $\blocked$.
The first component means that the blocking clause has to be added to the SMT encoding when the transition relation is unrolled for the $(s+\ell-1)^{th}$ time, i.e., when $b = s + \ell - 1$.
So blocking clauses are added to the SMT encoding ``on demand'' (in \Cref{alg:block1}) to block
loops that have been found on the trace at some point.
Afterwards, TRL backtracks to the last step before the loop in \Cref{alg:backtrack}.

\begin{remark}[Adding Blocking Clauses]
To see why blocking clauses must only be added to the SMT encoding
in the $(s+\ell-1)^{th}$ unrolling, assume $\pi \equiv \top$.
Then, e.g., $\blockingclause(1,2,\pi,\overline{\sigma}) \equiv \bot$.
This means that unrolling the transition relation twice is superfluous, as every state can be reached in a single step with $\pi$, so the diameter is $1$.
But after learning $\pi$ when $b=2$ and backtracking to $b=0$, adding such a blocking
clause too early (e.g., before the first unrolling of the transition relation) would
\emph{immediately} result in an unsatisfiable SMT problem.
\end{remark}

\begin{example}[Blocking Redundant Loops]
  \label{ex:redundant}
  Consider the model
  \[
    \sigma \Def [\ind{w}{1}/\ind{x}{1}/\ind{y}{1}/0,\ind[\id]{x}{1}/2, \quad
      \ind{w}{2}/0,\ind{x}{2}/\ind{y}{2}/2,\ind[\id]{x}{2}/1, \quad
      \ind{x}{3}/\ind{y}{3}/3]
  \]
  and assume that TRL has already learned the relation $\tau^+_\inc$ (i.e., $\vec{\pi} = [\tau,\tau^+_\inc]$).
  Moreover, assume that the trace is $[\tau^+_\inc,\tau_\inc]$, so that TRL detects the loop $\tau_\inc$.
  To check for non-redundancy, we instantiate the pre- and post-variables in $\tau^+_\inc$ according to $\sigma$, taking the renaming $\mu_{2}$ into account (note that here $s = 2$, $\ell = 1$, and $\mu_{s,\ell} = \mu_{2,1} = \mu_2$):
  \[
    \sigma(\mu_{2}(\tau^+_\inc)) = \tau^+_\inc[w/0,x/y/2,x'/y'/3] \equiv \top.
  \]
  So our sufficient criterion for non-redundancy fails, as $\tau_\inc$ is indeed redundant w.r.t.\ $\tau^+_\inc$.
  Thus, TRL records that the following blocking clause has to be added for the second
  unrolling (i.e., when $b = s + \ell -1 = 2$).
  \paper{\begin{align*}
            & \mu_{s,\ell}(\neg\mbip(\tau^+_\inc,\overline{\sigma})) \lor \ind[\id]{x}{s}
      > 1 \hspace{.4em} = \hspace{.4em} \mu_{s,\ell}(\neg\tau^+_\inc) \lor \ind[\id]{x}{s}
      > 1 \tag{as $\tau^+_\inc$ is a transition} \\
   {} = {} & \mu_{2}(\neg (w \doteq 0 \land x' > x \land x' - x \doteq y' - y)) \lor \ind[\id]{x}{2} > 1\\ 
    {} \equiv {} & (w \not\doteq 0 \lor x' \leq x \lor x' - x \not\doteq y' - y)[w/\ind{w}{2}, x/\ind{x}{2},y/\ind{y}{2}, x'/\ind{x}{3},y'/\ind{y}{3}] \lor \ind[\id]{x}{2} > 1 \\
    {} = {} & \ind{w}{2} \not\doteq 0 \lor \ind{x}{3} \leq \ind{x}{2} \lor \ind{x}{3} - \ind{x}{2} \not\doteq \ind{y}{3} - \ind{y}{2} \lor \ind[\id]{x}{2} >1
  \end{align*}}
  \report{\begin{align*}
            & \mu_{s,\ell}(\neg\mbip(\tau^+_\inc,\overline{\sigma})) \lor \ind[\id]{x}{s} > 1 \hspace{.4em} = \hspace{.4em} \mu_{s,\ell}(\neg\tau^+_\inc) \lor \ind[\id]{x}{s} > 1 \tag{as $\tau^+_\inc$ is a transition} \\
    {} = {} & \mu_{2}(\neg (w \doteq 0 \land x' > x \land x' - x \doteq y' - y)) \lor
    \ind[\id]{x}{2} > 1\\
    {} \equiv {} & (w \not\doteq 0 \lor x' \leq x \lor x' - x \not\doteq y' - y)[w/\ind{w}{2}, x/\ind{x}{2},y/\ind{y}{2}, x'/\ind{x}{3},y'/\ind{y}{3}] \lor \ind[\id]{x}{2} > 1 \\
    {} = {} & \ind{w}{2} \not\doteq 0 \lor \ind{x}{3} \leq \ind{x}{2} \lor \ind{x}{3} - \ind{x}{2} \not\doteq \ind{y}{3} - \ind{y}{2} \lor \ind[\id]{x}{2} >1
  \end{align*}}  
  As this blocking clause is falsified by $\sigma$, it prevents TRL from finding the same model again after backtracking in \Cref{alg:backtrack}, so that TRL makes progress.
\end{example}
The following theorem states that our approach is sound.
\begin{restatable}
  {theorem}{soundness}
  \label{thm:soundness}
  If $\text{TRL}(\TT)$ returns $\safe$, then $\TT$ is safe.
\end{restatable}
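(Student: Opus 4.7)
The plan is to prove the contrapositive: if $\TT$ is unsafe, then TRL does not return $\safe$. Fix a $\to_\tau$-run $\rho \colon \vec{v}_1 \to_\tau \ldots \to_\tau \vec{v}_{k+1}$ with $\vec{v}_1 \models \psi_\init$ and $\vec{v}_{k+1} \models \psi_\err$. The key tool is the following invariant (I), maintained each time TRL issues a satisfiability query (\Cref{alg:err2} or \Cref{alg:safe}): for every $\to_\tau$-run $\vec{v}_1 \to_\tau \ldots \to_\tau \vec{v}_{j+1}$ from an initial state, there is a \emph{hybrid run} $\vec{v}_1 = \vec{u}_1 \to_{\pi_{i_1}} \ldots \to_{\pi_{i_m}} \vec{u}_{m+1} = \vec{v}_{j+1}$ of some length $m \leq j$ whose steps are drawn from the current $\vec{\pi}$ and whose induced valuation of $\ind{\vec{x}}{1}, \ldots, \ind{\vec{x}}{m+1}$ and $\ind[\id]{x}{1}, \ldots, \ind[\id]{x}{m}$ satisfies every formula currently asserted in the SMT state that refers only to indices $\leq m$.

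I would establish (I) by induction on the operations performed by TRL. The base case is trivial because initially $\vec{\pi} = [\tau]$ and no blocking or transitivity constraints have been asserted, so $\rho$ itself is a hybrid witness. For the inductive step, three operations can strengthen the SMT state. (i) Learning a new relation in \Cref{alg:learn2} merely extends $\vec{\pi}$ and cannot invalidate a hybrid run. (ii) Asserting the transitivity constraint in \Cref{alg:trans} can invalidate a run that fires the same learned $\pi_n$ twice in a row, but by the transitivity of $\to_{\pi_n}$ (\Cref{def:ti}~(3)) such consecutive steps collapse into one, strictly shortening the run. (iii) Asserting a blocking clause $\blockingclause(s,\ell,\pi,\overline{\sigma})$ in \Cref{alg:block2} can invalidate a run only if its aggregate step from $\vec{u}_s$ to $\vec{u}_{s+\ell}$ satisfies $\mbip(\pi,\overline{\sigma})$; \Cref{def:projections}~(2) then yields $(\vec{u}_s,\vec{u}_{s+\ell}) \in {\to_\pi}$, so the $\ell$ steps can be replaced by a single $\pi$-step. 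For $\ell > 1$ this strictly shortens the run, and for $\ell = 1$ it sets $\ind[\id]{x}{s} > 1$, satisfying the remaining disjunct from \Cref{def:blocking}. Repair may cascade, but the lexicographic measure (hybrid-run length, number of steps using $\tau$) strictly decreases with each replacement, so the process terminates at a hybrid witness satisfying all currently asserted constraints.

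Given (I), suppose TRL returns $\safe$ at iteration $b$. Applying (I) to $\rho$ at the final sat check yields a hybrid witness of length $m \leq k$ ending in $\vec{v}_{k+1} \models \psi_\err$. If $m \geq b$, its length-$b$ prefix is a valuation of $\ind{\vec{x}}{1}, \ldots, \ind{\vec{x}}{b+1}$ and $\ind[\id]{x}{1}, \ldots, \ind[\id]{x}{b}$ satisfying every formula asserted in \Cref{alg:safe}, contradicting the unsatisfiability that triggered the $\safe$ return. Otherwise $m < b$; then TRL has visited iteration $b' = m+1 \leq b$ at some earlier point, and applying (I) at the error check of that iteration (possibly after an additional compression step to align the run's length with $b'-1$) produces a hybrid witness whose state at index $b'$ is the error state $\vec{v}_{k+1}$. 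This makes $\mu_{b'}(\psi_\err)$ satisfiable, so TRL would have returned $\unknown$ at that iteration, again contradicting the $\safe$ return.

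The main obstacle will be case (iii) of the inductive step: a compression forced by one blocking clause may itself violate another previously asserted blocking clause, so the repair must be iterated while tracking the renamings $\mu_{s,\ell}$ and the order in which blocking clauses were added to $\blocked$, until the lexicographic measure is driven down to a fixed point that simultaneously satisfies all currently asserted constraints. A secondary subtlety is that iteration $b'$ may be visited several times due to backtracking, with monotonically strengthening constraints; since blocking clauses never weaken, it suffices to apply (I) at the last visit before TRL's counter permanently exceeds $b'$.
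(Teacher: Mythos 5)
Your overall strategy---argue the contrapositive, compress the error run into a ``hybrid run'' over $\vec{\pi}$ that satisfies all asserted constraints, and contradict either the final unsatisfiability or an earlier error check---is the same as the paper's. However, your construction of the compressed run (iterative repair of a time-evolving witness, with a lexicographic termination measure) differs essentially from the paper's (a single induction on the length of the original run, built against the \emph{final} constraint set, which always takes the longest possible last step, realized by the latest-learned applicable relation), and this difference opens two genuine gaps. The first concerns what ``satisfies every formula currently asserted'' really demands: the $j$-th unrolling in the solver is $\bigvee_{n=1}^{\ell}\mu_j(\pi_n\land x_\id\doteq n)$ where $\ell$ is the length of $\vec{\pi}$ at the \emph{last visit to counter value $j$}, not its current length, because after backtracking past $j$ the unrolling is re-added with the then-current $\vec{\pi}$ and never refreshed afterwards. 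Your repairs shift later steps to earlier positions, so a step using a relation $\pi_n$ learned only after the $j$-th unrolling was last re-added may land at position $j$, where $\pi_n$ is not among the disjuncts; your invariant would then fail, and there is no obvious repair (learned relations cannot be decomposed back into $\tau$-steps, since they are not under-approximations). The paper avoids this by explicitly bounding the index $J$ of the relation chosen for step $B(I)+1$ by $\ell(B(I)+1)$.

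The second gap is in the case $m<b$ of your concluding argument. ``Applying (I) at the error check of iteration $b'=m+1$'' yields a witness of \emph{some} length at that moment, and nothing forces that length to be $b'-1$: a single blocking clause with $\ell>1$ collapses $\ell$ steps into one, so between consecutive error checks the witness length can drop by more than one and thereby skip every value of the form ``counter $-\,1$'' without ever matching one; your parenthetical ``additional compression step to align the run's length'' is not something the repair process provides. The correct move is to play the \emph{final} witness of length $m$ against the formula that was checked in \Cref{alg:err2} at the last iteration with counter $m+1$---this works because all unrollings, transitivity constraints and blocking clauses at indices $\le m$ present there coincide with those in the final solver state---but that again presupposes the availability property above. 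The paper sidesteps both issues at once: its witness grows by exactly one step per extension of the original run (so it cannot overshoot $b(k)$ without first hitting it), and the minimality of $I$ together with the maximality of $J$ makes the transitivity constraint and all blocking clauses hold by construction rather than by a posteriori repair. Your proof can likely be completed, but only by strengthening invariant (I) with the per-position availability condition and reworking the endgame as indicated, which in effect reproduces the paper's bookkeeping.
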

\paper{
  \begin{proofsketch}
    The key idea of the proof, which works by induction over the length of runs, is to show that blocking clauses do not prevent us from reaching all reachable states.
    See \cite{arxiv} for the full proof.
  \end{proofsketch}
}
\makeproof*{thm:soundness}{
  \soundness*
  \begin{proof}
    Consider the SMT problem that is checked in \Cref{alg:safe}.
    In the $m^{th}$ iteration (starting with $m=1$), this problem is of the form $\varphi(m) \Def$
    \begin{align*}
       & \overbrace{\mu_{1}(\psi_\init)}^{\substack{\text{initial states}                                             \\
      \text{\Cref{alg:init}}}} \land                                                                                                 \\
                      & \overbrace{\bigwedge_{j=1}^{b(m)}}^{\substack{\text{one conjunct}                                          \\
        \text{per step}}}
      \left( \overbrace{(\ind[\id]{x}{j} \doteq 1 \lor \ind[\id]{x}{j} \not\doteq \ind[\id]{x}{j-1})}^{\substack{\text{transitivity} \\
          \text{\Cref{alg:trans}}}}
      \land \overbrace{\bigvee_{i=1}^{\ell'(m)} \mu_{j}(\pi_i \land x_{\id} \doteq i)}^{\substack{\text{transition relation}       \\
          \text{\Cref{alg:unroll}}}} \land \overbrace{\bigwedge_{(j,\pi) \in \blocked(m)}
      \pi}^{\mathclap{\substack{\text{blocking clauses}                                                                              \\
            \text{\Cref{alg:block1}}}}} \right)
    \end{align*}
    where $b(m)$, $\ell'(m)$, and $\blocked(m)$ are the value of $b$, the length of $\vec{\pi}$, and the values of $\blocked$ in \Cref{alg:safe} in the $m^{th}$ iteration, respectively.
    For simplicity, here we use an additional variable $\ind[\id]{x}{0}$ which only occurs in the first transitivity constraint (which is not generated by \Cref{alg}):
    \[
      \ind[\id]{x}{1} \doteq 1 \lor \ind[\id]{x}{1} \not\doteq \ind[\id]{x}{0}
    \]
    Therefore, this constraint is trivially satisfiable (e.g., by setting $\ind[\id]{x}{0}$ to $1$).

    Assume that $\TT$ is unsafe, but \Cref{alg} returns $\safe$.
    Then there is some $k \in \NN$ such that $\varphi(k)$ is unsatisfiable, and $\varphi(k')$ is satisfiable for all $1 \leq k' < k$ (otherwise, \Cref{alg} would have returned $\safe$ in an earlier iteration).

    As \Cref{alg} backtracks in \Cref{alg:backtrack}, we consider the sequence of natural numbers $1 \leq i_1 < \ldots < i_{b(k)} = k$ such that for all $1 \leq c \leq b(k)$, $i_c$ is the last iteration where $b=c$ in \Cref{alg:safe}.
    Then for all $1 \leq B \leq b(k)$, we have $\varphi(i_B) = \phi(B)$ where
    \begin{align*}
      \phi(B) \Def & \mu_{1}(\psi_\init) \land \\
                   & \bigwedge_{j=1}^{B}
      \left( (\ind[\id]{x}{j} \doteq 1 \lor \ind[\id]{x}{j} \not\doteq \ind[\id]{x}{j-1}) \land \bigvee_{i=1}^{\ell(B)}\mu_{j}(\pi_i \land x_{\id} \doteq i) \land \bigwedge_{(j,\pi) \in \blocked(k)} \pi \right).
    \end{align*}
    Here, we have $\ell(B) \Def \ell'(i_B)$.
    A blocking clause $\pi$ is only added to the SMT encoding if $(j,\pi) \in \blocked$ and $b=j$ (see \Cref{alg:block1}) and \Cref{alg} backtracks until $b \leq j$ whenever such an element is added to $\blocked$ (see \Cref{alg:backtrack}).
    So when only considering the iterations $i_1, \ldots, i_{b(k)}$, then $b(i_B) = B$ and all blocking clauses of the form $(j,\pi) \in \blocked(k)$ where $j < B$ are already present when unrolling the transition relation for the $B^{th}$ time, i.e., they are already contained in $\blocked(i_B)$.

    In other words, we have
    \[
      \{(j,\pi) \mid (j,\pi) \in \blocked(k) \mid j < B\} \subseteq \blocked(i_B).
    \]
    Thus, we may use $\blocked(k)$ instead of $\blocked(i_B)$ in the definition of
    $\phi$.

    Let $c \in \NN$ and $\vec{v}_0,\ldots\vec{v}_c \in \CC^d$ be arbitrary but fixed where $[\vec{x}/\vec{v}_0] \models \psi_\init$ and
    \[
      \vec{v}_0 \to_{\tau} \ldots \to_{\tau} \vec{v}_c.
    \]
    We use induction on $c$ to show that\footnote{While $\phi(B)$ only corresponds to a formula that is checked by \Cref{alg} if $B > 0$, $\phi(0) \equiv \mu_1(\psi_\init)$ is well defined, too.}
    \begin{multline}
      \label{eq:goal}
      \forall 0 \leq i \leq c.\ \exists B(i) < b(k), h(0,i) < \ldots < h(B(i),i).\ h(0,i) = 0 \land h(B(i),i) = i\\
      {} \land \phi(B(i)) \text{ is consistent with } [\mu_{j+1}(\vec{x})/\vec{v}_{h(j,i)} \mid 0 \leq j \leq B(i)].
    \end{multline}
    Intuitively, $B(i)$ is the number of steps that are needed to reach $\vec{v}_i$ when also using learned relations, and $\vec{v}_{h(j,i)}$ is the $j^{th}$ state in the resulting run that leads to $\vec{v}_i$.
    Thus, \eqref{eq:goal} shows that for all (arbitrary long) runs that start in a state satisfying $\psi_\init$, all states that are reachable with $\to_\tau$ in arbitrarily many steps can also be reached in less than $b(k)$ steps (i.e., in constantly many steps) if one may also use the (transitive) learned relations.
    Of course, this only holds provided that \Cref{alg} returns $\safe$ in the $k^{th}$ iteration.

    Once we have shown \eqref{eq:goal}, we can prove the theorem:
    We had assumed that $\TT$ is unsafe, i.e., that there is a reachable error state $\vec{v}_c$ and that $\varphi(k) = \varphi(i_{b(k)}) = \phi(b(k))$ is unsatisfiable.
    We use \eqref{eq:goal} for $i = c$:
    By \eqref{eq:goal}, there is some $B(c) < b(k)$ such that
    \[
      \phi(B(c)) \text{ is consistent with } [\mu_{j+1}(\vec{x})/\vec{v}_{h(j,c)} \mid 0 \leq j \leq B(c)].
    \]
    Moreover, as $\vec{v}_c$ is an error state, $\psi_\err$ is consistent with $[\vec{x}/\vec{v}_{c}]$ and hence,
    \[
      \mu_{B(c)+1}(\psi_\err) \text{ is consistent with } [\mu_{B(c)+1}(\vec{x})/\vec{v}_{c}] = [\mu_{B(c)+1}(\vec{x})/\vec{v}_{h(B(c),c)}].
    \]
    Thus,
    \[
      \phi(B(c)) \land \mu_{B(c)+1}(\psi_\err) \text{ is consistent with } [\mu_{j+1}(\vec{x})/\vec{v}_{h(j,c)} \mid 0 \leq j \leq B(c)].
    \]
    Hence,
    \begin{equation}
      \label{eq:contradiction}
      \text{\Cref{alg} returns $\unknown$ in \Cref{alg:err2} in iteration $i_{B(c)+1}$}
    \end{equation}
    or earlier.
    The reason is that we have $b = B(c)+1$ in iteration $i_{B(c)+1}$, so in this iteration \Cref{alg} checks satisfiability of the formula $\phi(B(c)) \land \mu_{B(c)+1}(\psi_\err)$ in \Cref{alg:err2}.
    As we have $b(k) > B(c)$, we get $k = i_{b(k)} \geq i_{B(c)+1}$.
    Hence, \eqref{eq:contradiction} contradicts the assumption that \Cref{alg} returns $\safe$ in \Cref{alg:safe} in iteration $k$. 

    We now prove \eqref{eq:goal}. 
    In the induction base, we have
    \begin{align*}
                             & [\vec{x}/\vec{v}_0] \models \psi_\init                                                     \\
      {} \curvearrowright {} & [\mu_{1}(\vec{x})/\vec{v}_0] \models \mu_{1}(\psi_\init)                                   \\
      {} \curvearrowright {} & [\mu_{1}(\vec{x})/\vec{v}_0] \models \phi(0) \tag{as $\phi(0) \equiv \mu_{1}(\psi_\init)$}
    \end{align*}
    Hence, the claim follows for
    \[
      B(0) = 0 = h(0,0).
    \]
    In the induction step, the induction hypothesis implies
    \begin{multline}
      \label{eq:IH}
      \forall 0 \leq i < c.\ \exists B(i) < b(k), h(0,i) < \ldots < h(B(i),i).\ h(0,i) = 0 \land h(B(i),i) = i\\
      {} \land \phi(B(i)) \text{ is consistent with } [\mu_{j+1}(\vec{x})/\vec{v}_{h(j,i)} \mid 0 \leq j \leq B(i)].
    \end{multline}
    Let:
    \begin{align*}
      I & {} \Def \min \{i \mid 0 \leq i < c, 1 \leq j \leq \ell(B(i)+1), \vec{v}_i \to_{\pi_j} \vec{v}_c\} \\
      J & {} \Def \max \{j \mid 1 \leq j \leq \ell(B(I)+1), \vec{v}_I \to_{\pi_j} \vec{v}_c\}
    \end{align*}
    So $I$ is the minimal index such that we can make a step from $\vec{v}_I$ to $\vec{v}_c$, and among the relational formulas that can be used for this step, $\pi_J$ is the one that was learned last.
    Note that $I$ (and hence also $J$) exists, as we have $\vec{v}_{c-1} \to_{\tau} \vec{v}_c$ and $\tau = \pi_1$.
    By \eqref{eq:IH},
    \[
      \phi(B(I)) \text{ is consistent with } [\mu_{j+1}(\vec{x})/\vec{v}_{h(j,I)} \mid 0 \leq j \leq B(I)].
    \]
    Let $\theta_I$ be an extension of $[\mu_{j+1}(\vec{x})/\vec{v}_{h(j,I)} \mid 0 \leq j \leq B(I)]$ such that $\theta_I \models \phi(B(I))$, and let $\theta$ be an extension of
    \begin{equation}
      \label{eq:theta}
      \theta_I \uplus [\mu_{B(I)+2}(\vec{x})/\vec{v}_c] \uplus [\mu_{B(I)+1}(x_{\id}) / J]
    \end{equation}
    such that $\theta \models \mu_{B(I)+1}(\pi_J)$, which exists as we have:
    \begin{align*}
      & \theta(\mu_{B(I)+1}(\vec{x})) \\
      {} = {} & \theta_I(\mu_{B(I)+1}(\vec{x})) \tag{by \eqref{eq:theta}} \\
      {} = {} & \vec{v}_{h(B(I),I)} \tag{def.\ of $\theta_I$} \\
      {} = {} & \vec{v}_I  \tag{as $h(B(I),I) = I$} \\
      {} \to_{\pi_J} {} & \vec{v}_c \tag{def.\ of $J$}\\
      {} = {} & \mu_{B(I)+1}(\vec{x}')[\mu_{B(I)+1}(\vec{x}')/\vec{v}_c] \\
      {} = {} & \mu_{B(I)+1}(\vec{x}')[\mu_{B(I)+2}(\vec{x})/\vec{v}_c] \tag{as $\mu_{B(I)+2}(\vec{x}) = \mu_{B(I)+1}(\vec{x}')$} \\
      {} = {} & \theta(\mu_{B(I)+1}(\vec{x}')) \tag{by \eqref{eq:theta}}
    \end{align*}
    So we have
    \[
      \theta(\ind{\vec{x}}{B(I)+1}) = \theta_I(\ind{\vec{x}}{B(I)+1}) = \vec{v}_{h(B(I),I)} = \vec{v}_I
    \]
    and
    \[
      \theta(\ind{\vec{x}}{B(I)+2}) = \vec{v}_c.
    \]

    We now show $\theta \models \phi(B(I)+1)$.
    Then we get $B(c) = B(I) + 1$, $h(j,c) = h(j,I)$ for all $j \leq B(I)$, and $h(B(c),c) = c$, which finishes the proof of \eqref{eq:goal}.

    Since $\theta$ is an extension of $\theta_I$, we have $\theta \models \phi(B(I))$, so we only need to show
    \[
   \hspace*{-.8cm}   \theta \models (\ind[\id]{x}{B(I)+1} \doteq 1 \lor \ind[\id]{x}{B(I)+1} \not\doteq
      \ind[\id]{x}{B(I)}) \land \hspace*{-.2cm}\bigvee_{i=1}^{\ell(B(I)+1)}\hspace*{-.2cm}\mu_{B(I)+1}(\pi_i \land
      x_{\id} \doteq i) \land \bigwedge_{\mathclap{(B(I)+1,\pi) \in \blocked(k)}} \pi, 
    \]
    i.e., we only need to show that $\theta$ is a model of the last conjunct of $\phi(B(I)+1)$.
    For the disjunction
    \begin{equation}
      \label{eq:disjunction}
      \bigvee_{i=1}^{\ell(B(I)+1)}\mu_{B(I)+1}(\pi_i \land x_{\id} \doteq i),
    \end{equation}
    note that $J \leq \ell(B(I)+1)$.
    Thus, to show $\theta \models \eqref{eq:disjunction}$, it suffices if
    \[
      \theta \models \mu_{B(I)+1}(\pi_J \land x_{\id} \doteq J),
    \]
    which holds by construction of $\theta$.
    Thus, it remains to show
    \[
      \theta \models (\ind[\id]{x}{B(I)+1} \doteq 1 \lor \ind[\id]{x}{B(I)+1} \not\doteq \ind[\id]{x}{B(I)}) \land \bigwedge_{\mathclap{(B(I)+1,\pi) \in \blocked(k)}} \pi.
    \]

    We first consider the disjunction $\ind[\id]{x}{B(I)+1} \doteq 1 \lor \ind[\id]{x}{B(I)+1} \not\doteq \ind[\id]{x}{B(I)}$.
    We show that we always have $J=1$ or $\theta(\ind[\id]{x}{B(I)}) \neq J$.
    Then this disjunction is clearly satisfied by $\theta$ since $\theta(\ind[\id]{x}{B(I)+1}) = J$.
    To see why $J=1$ or $\theta(\ind[\id]{x}{B(I)}) \neq J$ holds, assume that $\theta(\ind[\id]{x}{B(I)}) = J > 1$.
    Then:
    \begin{align*}
                             & \theta \models \phi(B(I)) \tag {as $\theta$ is an extension of $\theta_I$} \\
      {} \curvearrowright {} & \theta \models \mu_{B(I)}(\pi_J \land x_{\id} \doteq J) \tag{as $\theta(\ind[\id]{x}{B(I)}) = J$ by assumption}                                                                                 \\
      {} \curvearrowright {} & \mu_{B(I)}(\pi_J) \text{ is consistent with } [\mu_{j+1}(\vec{x})/\vec{v}_{h(j,I)} \mid 0 \leq j \leq B(I)] \tag{as $[\mu_{j+1}(\vec{x})/\vec{v}_{h(j,I)} \mid 0 \leq j \leq B(I)] \subseteq \theta$} \\
      {} \curvearrowright {} & \mu_{B(I)}(\pi_J) \text{ is consistent with } [\mu_{B(I)}(\vec{x})/\vec{v}_{h(B(I)-1,I)},\mu_{B(I)+1}(\vec{x})/\vec{v}_{h(B(I),I)}] \tag{by instantiating $j$ with $B(I)-1$ and $B(I)$}           \\
      {} \curvearrowright {} & \pi_J \text{ is consistent with } [\vec{x}/\vec{v}_{h(B(I)-1,I)},\vec{x}'/\vec{v}_{h(B(I),I)}]                                                                                                      \\
      {} \curvearrowright {} & \pi_J \text{ is consistent with } [\vec{x}/\vec{v}_{h(B(I)-1,I)},\vec{x}'/\vec{v}_{I}] \tag{as $h(B(I),I) = I$}                                                                                     \\
      {} \curvearrowright {} & \vec{v}_{h(B(I)-1,I)} \to_{\pi_J} \vec{v}_{I}.
    \end{align*}
    By the definition of $I$ and $J$, we also have $\vec{v}_{I} \to_{\pi_J} \vec{v}_{c}$.
    So by transitivity of $\to_{\pi_J}$ (which holds since $J > 1$), we get $\vec{v}_{h(B(I)-1,I)} \to_{\pi_J} \vec{v}_{c}$.
    As we have $h(B(I)-1,I) < I$ by definition of $h$, this contradicts minimality of $I$.
    Note that here is the only point where we need the transitivity of learned relations.

    Therefore, it remains to show
    \[
      \theta \models \bigwedge_{\mathclap{(B(I)+1,\pi) \in \blocked(k)}} \pi.
    \]
    The elements of $\blocked(k)$ have the form $(s + \ell - 1, \blockingclause(s,\ell,\pi_j,\overline{\sigma}))$ where $[\tau_i]_{i=s}^{s+\ell-1}$ is a loop on the trace, $\pi_j \in \tail(\vec{\pi})$ is a learned relation, and $\overline{\sigma} \models \pi_j$.
    Moreover, we have $s + \ell - 1 = B(I) + 1$.
    We now perform a case analysis for the two cases where $\ell = 1$ and $\ell > 1$.

    In Case 1 (where $\ell = 1$), the blocking clause has the form
    $\blockingclause(s,1,\linebreak \pi_j,\overline{\sigma})$ for $s = B(I)+1$.
    Thus, we show that $\theta$ cannot violate a blocking clause of the form
    \[
      \blockingclause(B(I),1,\pi_j,\overline{\sigma}) = \mu_{B(I)+1}(\neg\mbip(\pi_j,\overline{\sigma})) \lor \ind[\id]{x}{B(I)+1} > 1.
    \]
    To see this, we first consider the case where the negation of the first disjunct holds and prove that then the second disjunct is true.
    The reason is that we have:
    \begin{align*}
                             & \theta \models \mu_{B(I)+1}(\mbip(\pi_j, \overline{\sigma}))                                                                                     \\
      {} \curvearrowright {} & \theta \circ \mu_{B(I)+1} \text{ is consistent with } \pi_j \tag{since $\mbip(\pi_j,\overline{\sigma}) \models \pi_j$ by \Cref{def:projections}} \\
      {} \curvearrowright {} & \theta(\mu_{B(I)+1}(\vec{x})) \to_{\pi_j} \theta(\mu_{B(I)+1}(\vec{x}'))                                                                \\
      {} \curvearrowright {} & \theta(\mu_{B(I)+1}(\vec{x})) \to_{\pi_j} \theta(\mu_{B(I)+2}(\vec{x}))                                                                \\
      {} \curvearrowright {} & \vec{v}_{h(B(I),I)} \to_{\pi_j} \vec{v}_{c} \tag{def.\ of $\theta$}                                                                 \\
      {} \curvearrowright {} & \vec{v}_{I} \to_{\pi_j} \vec{v}_{c} \tag{as $h(B(I),I) = I$}
    \end{align*}
    So the step from $\vec{v}_{I}$ to $\vec{v}_{c}$ can be done by a learned relation $\pi_j$.
    As $\pi_j$ is a learned relation, we have $j > 1$.
    This implies $\theta(\ind[\id]{x}{B(I)+1}) = J > 1$, as $J$ is maximal and hence $J \geq j > 1$.

    Now we consider the case where the negation of the first disjunct does not hold and prove that then the first disjunct is true.
    The reason is that due to the completeness of $\AA$, $\theta \centernot\models \mu_{B(I)+1}(\mbip(\pi_j, \overline{\sigma}))$ implies $\theta \models \mu_{B(I)+1}(\neg\mbip(\pi_j, \overline{\sigma}))$.
    So in this case the blocking clause is satisfied as well.

    In Case 2 (where $\ell > 1$), we show that $\theta$ cannot violate a blocking clause of the form
    \[
      \blockingclause(s,\ell,\pi_j,\overline{\sigma}) = \mu_{s,\ell}(\neg\mbip(\pi_j, \overline{\sigma}))
    \]
    where $s + \ell - 1 = B(I)+1$, i.e., $s + \ell = B(I) + 2$.
    The reason is that we have
    \begin{align*}
                             & \theta \centernot\models \mu_{s,\ell}(\neg\mbip(\pi_j, \overline{\sigma}))                                                                       \\
      {} \curvearrowright {} & \theta \models \mu_{s,\ell}(\mbip(\pi_j, \overline{\sigma})) \tag{as $\AA$ is complete} \\
      {} \curvearrowright {} & \theta \circ \mu_{s,\ell} \text{ is consistent with } \pi_j \tag{since $\mbip(\pi_j,\overline{\sigma}) \models \pi_j$ by \Cref{def:projections}} \\
      {} \curvearrowright {} & \theta(\mu_{s,\ell}(\vec{x})) \to_{\pi_j} \theta(\mu_{s,\ell}(\vec{x}'))                                                              \\
      {} \curvearrowright {} & \vec{v}_{h(s-1,I)} \to_{\pi_j} \vec{v}_{c} \tag{def.\ of $\theta$, as $s+\ell = B(I)+2$}
    \end{align*}
    For the last step, note that
    \begin{align*}
      \theta(\mu_{s,\ell}(\vec{x})) & {} = \theta(\mu_{s}(\vec{x})) = \theta_I(\mu_{s}(\vec{x})) = \vec{v}_{h(s-1,I)} & \text{and} \\
      \theta(\mu_{s,\ell}(\vec{x}')) & {} = \theta(\ind{\vec{x}}{s + \ell}) = \theta(\ind{\vec{x}}{B(I)+2}) = \vec{v}_{c}.
    \end{align*}
    We have $s - 1 < B(I)$ and thus $h(s-1,I) < h(B(I),I) = I$, which contradicts minimality of $I$.
    This finishes the proof of \eqref{eq:goal}.
    \qed
  \end{proof}
}

\begin{remark}[Termination]
  \label{remark:termination}
In general, \Cref{alg} does not terminate, since
 $\tip$ decomposes the relation into finitely many cases and
approximates their transitive closures independently, but
${\to^+_{\tau_\Loop}} \subseteq \bigcup_{\sigma \models \tau_{\Loop}}
{\to_{\tip(\tau_{\Loop},\sigma)}}$ is not guaranteed (\Cref{remark:properties-tip}).
To see why this may prevent termination, consider a loop $\tau_{\Loop}$ and assume that there are reachable states
$\vec{v},\vec{v}'$ with $\vec{v} \to_{\tau_\Loop}^+ \vec{v}'$, but $\vec{v}
\not\to_{\tip(\tau_{\Loop},\sigma)} \vec{v}'$ for all models $\sigma$ of
$\tau_\Loop$.
Then TRL may find a model that corresponds to a run from $\vec{v}$ to $\vec{v}'$.
Unless $\vec{v}$ can be evaluated to $\vec{v}'$ with another learned transition $\pi \notin \{\tip(\tau_{\Loop},\sigma) \mid \sigma \models \tau_{\Loop}\}$ by coincidence, this loop cannot be blocked and TRL learns a new relation.
Thus, TRL may keep learning new relations as long as there are loops whose transitive closure is not yet covered by learned relations.

As the elements of $\{\tip(\tau,\sigma) \mid \sigma \models \tau\}$ are independent of each other, a more ``global'' view may help to enforce convergence.
We leave that to future work.
\end{remark}

\paper{
  \begin{example}[\Cref{ex:ex1} Finished]
    After learning $\tau^+_\dec$ and $\tau^+_\inc$, the underlying SMT problem becomes unsatisfiable when $b=3$ after adding appropriate blocking clauses, so that $\tau^+_\dec$ and $\tau^+_\inc$ are preferred over $\tau_\dec$ and $\tau_\inc$.
    The reason is that $\tau^+_\dec$ and $\tau^+_\inc$ must not be used twice in a row due to \Cref{alg:trans} of \Cref{alg}, and $\tau^+_\inc$ cannot be used after $\tau^+_\dec$, as it requires $w \doteq 0$, but $\tau^+_\dec$ sets $w$ to $1$.
    Thus, \Cref{alg} returns $\safe$.
    See \cite{arxiv} for a detailed run of
\Cref{alg} 
on \Cref{ex:ex1}.
  \end{example}
}

\report{
\subsection{A Complete Example}
\label{sec:example}

For a complete run of TRL on our example, assume that we obtain the following traces (where the detected loops are underlined):
\begin{enumerate}
  \item \label{it:a}
        $[\underline{\tau_\inc}]$, resulting in the learned relation $\tau^+_\inc$ and the blocking clause $\mu_{1}(\neg\tau^+_\inc \lor x_\id > 1)$ which ensures that if $b = 1$, then we cannot use $\tau_\inc$ but would have to use $\tau^+_\inc$.
  \item \label{it:b}
        $[\underline{\tau_\dec}]$, resulting in the learned relation $\tau^+_\dec$ and the blocking clause $\mu_{1}(\neg\tau^+_\dec \lor x_\id > 1)$ which ensures that if $b = 1$, then we cannot use $\tau_\dec$ but would have to use $\tau^+_\dec$.
  \item \label{it:c}
        $[\tau^+_\inc,\underline{\tau_\inc}]$, resulting in the blocking clause $\mu_{2}(\neg\tau^+_\inc \lor x_\id > 1)$ which ensures that if $b = 2$, then we cannot use $\tau_\inc$.
        Using $\tau^+_\inc$ twice after each other is also not possible due to transitivity (\Cref{alg:trans}).
  \item \label{it:d}
        $[\tau^+_\dec,\underline{\tau_\dec}]$, resulting in the blocking clause $\mu_{2}(\neg\tau^+_\dec \lor x_\id > 1)$ which ensures that if $b = 2$, then we cannot use $\tau_\dec$.
        Using $\tau^+_\dec$ twice after each other is also not possible due to transitivity (\Cref{alg:trans}).
  \item \label{it:e}
        $[\tau^+_\inc, \tau^+_\dec, \underline{\tau_\dec}]$, resulting in the blocking clause $\mu_{3}(\neg\tau^+_\dec \lor x_\id > 1)$ which ensures that if $b = 3$, then we cannot use $\tau_\dec$.
\end{enumerate}
Now we are in the following situation:
\begin{itemize}
  \item The first element of the trace cannot be $\tau_\inc$ or $\tau_\dec$ due to \eqref{it:a} and \eqref{it:b}.
  \item If the first element of the trace is $\tau^+_\inc$, then: 
        \begin{itemize}
          \item The second element of the trace cannot be $\tau_\inc$ or $\tau_\dec$ due to \eqref{it:c} and \eqref{it:d}.
          \item The second element of the trace cannot be $\tau^+_\inc$ due to \Cref{alg:trans}.
          \item If the second element of the trace is $\tau^+_\dec$, then:
                \begin{itemize}
                  \item The third element of the trace cannot be $\tau_\inc$ or $\tau^+_\inc$, as $\tau^+_\dec$ sets $w$ to $1$, but $\tau_\inc$ and $\tau^+_\inc$ require $w = 0$.
                  \item The third element of the trace cannot be $\tau_\dec$ due to \eqref{it:e}.
                  \item The third element of the trace cannot be $\tau^+_\dec$ due to \Cref{alg:trans}.
                \end{itemize}
        \end{itemize}
        So this case becomes infeasible with the $3^{rd}$ unrolling of the transition relation.
  \item If the first element of the trace is $\tau^+_\dec$, then:
        \begin{itemize}
          \item The second element of the trace cannot be $\tau_\inc$ or $\tau^+_\inc$, as $\tau^+_\dec$ sets $w$ to $1$, but $\tau_\inc$ and $\tau^+_\inc$ require $w = 0$.
          \item The second element of the trace cannot be $\tau_\dec$ due to \eqref{it:d}.
          \item The second element of the trace cannot be $\tau^+_\dec$ due to \Cref{alg:trans}.
        \end{itemize}
        So this case becomes infeasible with the $2^{nd}$ unrolling of the transition relation.
\end{itemize}
Thus, the underlying SMT problem becomes unsatisfiable when $b=3$, such that $\safe$ is returned in \Cref{alg:safe}.
}

\section{Implementing $\tip$ for Linear Integer Arithmetic}
\label{sec:rec}

We now explain how to compute transitive projections for quantifier-free linear integer arithmetic via recurrence analysis.
As in SMT-LIB \cite{smtlib}, in our setting linear integer arithmetic also features (in)divisibility predicates of the form $e|t$ (or $e\!{\not|}t$) where $e \in \NN_+$ and $t$ is an integer-valued term.
Then we have $\sigma \models e|t$ iff $\sigma(t)$ is a multiple of $e$, and $\sigma \models e\!{\not|}t$, otherwise.

The technique that we use is inspired by the recurrence analysis from \cite{kincaid15}.
However, there are some important differences.
The approach from \cite{kincaid15} computes convex hulls to over-approximate disjunctions by conjunctions, and it relies on polyhedral projections.
In our setting, we always have a suitable model at hand, so that we can use $\mbip$ instead.
Hence, our recurrence analysis can be implemented more efficiently.\footnote{The \emph{double description method}, which is popular for computing polyhedral projections and convex hulls, and other state-of-the-art approaches have exponential complexity \cite{dd-exp,fmplex}.
  See \cite{convex-hull} for an easily accessible discussion of the complexity of the double description method.
  In contrast, combining the model based projection from \cite{spacer} with syntactic implicant projection \cite{adcl} yields a polynomial time algorithm for $\mbip$.}
Additionally, our recurrence analysis can handle divisibility predicates, which are not covered in \cite{kincaid15}.

On the other hand, \cite{kincaid15} yields an over-approximation of the transitive closure of the given relation, whereas our approach performs an implicit case analysis (via $\mbip$) and only yields an over-approximation of the transitive closure of one out of finitely many cases.

Moreover, the recurrence analysis from \cite{kincaid15} also discovers non-linear relations, and then uses linearization techniques to eliminate them.
For simplicity, our recurrence analysis only derives linear relations so far.
However, just like \cite{kincaid15}, we could also derive non-linear relations and linearize them afterwards.
Apart from these differences, our technique is analogous to \cite{kincaid15}.

In the sequel, let $\tau$ and $\sigma \models \tau$ be fixed.
Our implementation of $\tip(\tau,\sigma)$ first searches for \emph{recurrent literals}, i.e., literals of the form\footnote{W.l.o.g., we assume that literals are never negated, as we can negate the corresponding (in)equalities or divisibility predicates directly instead.
  Furthermore, in our implementation, we replace disequalities $s \not\doteq t$ with $s >
  t \lor s < t$ and eliminate the resulting disjunction via $\mbip$ to obtain a transition
  without disequalities.}
\[
  t \bowtie 0 \text{ or } e|t \quad \text{where} \quad t = \sum_{x \in \vec{x}} c_x
  \cdot (x'-x) + c, \; {\bowtie} \in \{\leq,\geq,<,>,\doteq\},
  \text{ and } c_x,c \in \ZZ.
\]
Hence, these literals provide information about the change of values of variables.
To find such literals, we introduce a fresh variable $x_\delta$ for each $x \in \vec{x}$, and we conjoin $x_\delta \doteq x' - x$ to $\tau$, i.e., we compute
\[
  \tau_{\land \delta} \Def \tau \land \bigwedge_{x \in \vec{x}} x_\delta \doteq x' - x.
\]
So the value of $x_\delta$ corresponds to the change of $x$ when applying $\tau$.
Next, we use $\mbip$ to eliminate all variables but $\{x_\delta \mid x \in \vec{x}\}$ from $\tau_{\land\delta}$, resulting in $\tau_\delta$.
More precisely, we have
\[
  \tau_\delta \Def \mbp(\tau_{\land \delta}, \quad \sigma \uplus [x_\delta/\sigma(x'-x) \mid x \in \vec{x}], \quad \{x_\delta \mid x \in \vec{x}\}).
\]
Finally, to obtain a formula where all literals are recurrent, we replace each $x_\delta$ by its definition, i.e., we compute
\[
  \tau_\rec \Def \tau_\delta[x_\delta / x' - x \mid x \in \vec{x}].
\]
\begin{example}[Finding Recurrent Literals]
  \label{ex:finding-rec}
  Consider the transition $\tau_\dec$.
  We first construct the formula
  \[
    \tau_{\land \delta} \Def \tau_\dec \land w_\delta \doteq w' - w \land x_\delta \doteq x' - x \land y_\delta \doteq y' - y.
  \]
  Then for any model $\sigma \models \tau_\dec$, we get\footnote{In the case of $\tau_\dec$, we obtain the same formula $\tau_\delta$ for \emph{every} model $\sigma \models \tau_\dec$, as variables can simply be eliminated by propagating equalities.}
  \begin{align*}
    \tau_\delta \Def {} & \mbp(\tau_{\land \delta}, \quad \sigma \uplus [w_\delta/0, \; x_\delta/{-}1, \; y_\delta/{-}1], \quad \{w_\delta, x_\delta, y_\delta\}) \\
                {} = {} & w_\delta \doteq 0 \land x_\delta \doteq -1 \land y_\delta \doteq -1.
  \end{align*}
  Next, replacing $w_\delta,x_\delta$, and $y_\delta$ with their definition results in
  \begin{align*}
    \tau_\rec \Def {} & w' - w \doteq 0 \land x' - x \doteq -1 \land y' - y \doteq -1        \\
    \equiv {}         & w' - w \doteq 0 \land x' - x + 1 \doteq 0 \land y' - y + 1 \doteq 0.
  \end{align*}
\end{example}
Then the construction of $\tip(\tau,\sigma)$ proceeds as follows:
\begin{itemize}
  \item $\tip(\tau,\sigma)$ contains the literal $n > 0$, where $n \in \VV$ is a fresh extra variable
  \item for each literal $\sum_{x \in \vec{x}} c_x \cdot (x'-x) + c \bowtie 0$ of $\tau_\rec$, $\tip(\tau,\sigma)$ contains the literal $\sum_{x \in \vec{x}} c_x \cdot (x'-x) + n \cdot c \bowtie 0$
  \item for each literal $e|\sum_{x \in \vec{x}} c_x \cdot (x'-x) + c$ of $\tau_\rec$, $\tip(\tau,\sigma)$ contains the literal $e|\sum_{x \in \vec{x}} c_x \cdot (x'-x) + n \cdot c$
\end{itemize}
Intuitively, the extra variable $n$ can be thought of as a ``loop counter'', i.e., when
$n$ is instantiated with some constant $k$, then the literals above approximate the change
of variables when $\to_\tau$ is applied $k$ times.

\begin{example}[Computing $\tip$ (1)]
  \label{ex:tip1}
  Continuing \Cref{ex:finding-rec}, $\tip(\tau_\dec,\sigma)$ contains the literals
  \begin{align*}
    {}           & n > 0 \land w' - w + n \cdot 0 \doteq 0 \land x' - x + n \cdot 1 \doteq 0 \land y' - y + n \cdot 1 \doteq 0 \\
    {} \equiv {} & n > 0 \land w' \doteq w \land x' \doteq x - n \land y' \doteq y - n
  \end{align*}
  for any model $\sigma \models \tau_\dec$.
  Note that in our example, this formula precisely characterizes the change of the variables after $n$ iterations of $\to_{\tau_\dec}$.
  To simplify the formula above, we can propagate the equality $n = x - x'$, resulting in:
  \begin{align}
    & x - x' > 0 \land w' \doteq w \land y' \doteq y - x + x' \notag \\
    {} \equiv {} & w' \doteq w \land x' < x \land x' - x \doteq y' - y \label{eq:rel}
  \end{align}
\end{example}
Compared to $\tau_\dec^+$, \eqref{eq:rel} lacks the literal $w \doteq 1$.
To incorporate information about the pre- and post-variables (but not about their relation) we conjoin $\mbp(\tau,\sigma,\vec{x})$ and $\mbp(\tau,\sigma,\vec{x}')$ to $\tip(\tau,\sigma)$.

\begin{example}[Computing $\tip$ (2)]
  We finish \Cref{ex:tip1} by conjoining
  \[
    \mbp(\tau_\dec,\sigma,\{w,x,y\}) = w \doteq 1 \qquad \text{and} \qquad \mbp(\tau_\dec,\sigma,\{w',x',y'\}) = w' \doteq 1
  \]
  to \eqref{eq:rel}, resulting in:
  \[
    w' \doteq w \land x' < x \land x' - x \doteq y' - y \land w \doteq 1 \land w' \doteq 1 \quad \equiv \quad \tau_\dec^+
  \]
\end{example}

\begin{example}[Divisibility]
  To see how $\tip$ can handle divisibility predicates, consider the transition
  \[
    \tau \Def 2|x \land 3|x' - x + 1.
  \]
  Then our approach identifies the recurrent literal $\tau_\rec = 3|x' - x + 1$, so that $\tip(\tau,\sigma)$ contains the literal $3|x' - x + n$.
  To see why we conjoin this literal to $\tip(\tau,\sigma)$, note that $3|x'-x+1$ and
  $3|x'-x+n$ are equivalent to $x'-x + 1 \equiv_3 0$ and $x'-x + n \equiv_3 0$, respectively, where ``$\equiv_3$'' denotes congruence modulo $3$.
  So $e \to^n_{\tau} e'$ implies $e'-e+n \equiv_3 0$, just like $e \to^n_{\pi} e'$ implies
  $e'-e+n=0$ for $\pi \Def x'-x+1 \doteq 0$.
  Moreover, we have $\mbp(\tau,\sigma,\vec{x}) = 2|x$, and thus
  \[
    \tip(\tau,\sigma) = n>0 \land 3|x'-x+n \land 2|x.
  \]
\end{example}

\noindent
We refer to \cite{arxiv} for
the straightforward proof of the following theorem:
\newcounter{tip}
\setcounter{tip}{\value{theorem}}
\begin{theorem}[Correctness of $\tip$]
  \label{thm:tip}
  The function $\tip$ as defined above is a transitive projection.
\end{theorem}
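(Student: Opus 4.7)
The plan is to verify the three conditions of \Cref{def:ti} in turn, treating each as a routine consequence of the construction together with the defining properties of $\mbip$ and $\mathsf{mbp}$.

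For condition (1), consistency with $\sigma$, I would argue that $\sigma \cup [n / 1]$ is a model of $\tip(\tau,\sigma)$. The clause $n > 0$ is trivial, and plugging $n = 1$ into each literal of the form $\sum_x c_x \cdot (x'-x) + n \cdot c \bowtie 0$ (or its divisibility analogue) recovers the corresponding literal of $\tau_{\rec}$. It therefore suffices to establish $\sigma \models \tau_{\rec}$ and $\sigma \models \mathsf{mbp}(\tau,\sigma,\vec{x}) \land \mathsf{mbp}(\tau,\sigma,\vec{x}')$. The latter is immediate from property (1) of $\mathsf{mbp}$. For the former, let $\sigma' \Def \sigma \uplus [x_\delta / \sigma(x'-x) \mid x \in \vec{x}]$; then $\sigma' \models \tau_{\land \delta}$ and by property (1) of $\mathsf{mbp}$, $\sigma' \models \tau_\delta$. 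Since $\sigma(x'-x) = \sigma'(x_\delta)$ for every $x \in \vec{x}$, applying $\sigma$ to $\tau_{\rec} = \tau_\delta[x_\delta / x'-x]$ gives the same value as applying $\sigma'$ to $\tau_\delta$, so $\sigma \models \tau_{\rec}$.

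For condition (2), finiteness, the observation is that $\tip(\tau,\theta)$ depends on $\theta$ only through the three calls $\mathsf{mbp}(\tau_{\land\delta}, \theta', \{x_\delta \mid x \in \vec{x}\})$, $\mathsf{mbp}(\tau,\theta,\vec{x})$, and $\mathsf{mbp}(\tau,\theta,\vec{x}')$. Each of these has a finite range as $\theta$ varies, by property (3) of $\mathsf{mbp}$. Hence the set of results is contained in the Cartesian product of three finite sets, which is finite.

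For condition (3), transitivity, suppose $\vec{v} \to_{\tip(\tau,\sigma)} \vec{v}'$ and $\vec{v}' \to_{\tip(\tau,\sigma)} \vec{v}''$, witnessed by values $n_1, n_2 > 0$ of the extra variable $n$. I will show that the choice $n \Def n_1 + n_2$ makes $\vec{v} \to_{\tip(\tau,\sigma)} \vec{v}''$. For each linear recurrent literal, we have $\sum_x c_x (v'_x - v_x) + n_1 c \bowtie 0$ and $\sum_x c_x (v''_x - v'_x) + n_2 c \bowtie 0$; adding these (valid for $\bowtie \in \{\leq,\geq,<,>,\doteq\}$ since the set of non-positive / non-negative / zero reals is closed under addition) yields $\sum_x c_x (v''_x - v_x) + (n_1+n_2) c \bowtie 0$. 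For a divisibility literal $e \mid \sum_x c_x (x'-x) + n c$, we use that $e\mathbb{Z}$ is closed under addition. Finally, $\mathsf{mbp}(\tau,\sigma,\vec{x})$ constrains only the pre-variables and is satisfied by $\vec{v}$ (from the first step), while $\mathsf{mbp}(\tau,\sigma,\vec{x}')$ constrains only the post-variables and is satisfied by $\vec{v}''$ (from the second step), so both continue to hold for the composed transition.

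The main obstacle is the bookkeeping in condition (1): one must track how the fresh $\delta$-variables, the projection $\mathsf{mbp}$, and the subsequent back-substitution $x_\delta \mapsto x' - x$ interact, and in particular verify that projecting onto $\{x_\delta \mid x \in \vec{x}\}$ does not discard information needed to evaluate $\tau_\rec$ under $\sigma$. Once the identity $\sigma(\tau_\rec) = \sigma'(\tau_\delta)$ is set down carefully, the remaining two conditions are essentially accounting.
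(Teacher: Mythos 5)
Your proposal is correct and follows essentially the same route as the paper's own proof: establishing $\sigma \models \tau_\rec$ via the chain through $\tau_{\land\delta}$ and $\tau_\delta$, verifying consistency with the witness $n=1$, deriving finiteness from the finite image of $\mathsf{mbp}$, and proving transitivity by composing the two witnesses into $n_1+n_2$ and using closure of each literal type under addition. Your explicit treatment of why $\mathsf{mbp}(\tau,\sigma,\vec{x})$ and $\mathsf{mbp}(\tau,\sigma,\vec{x}')$ survive composition is, if anything, slightly more careful than the paper's remark that these conjuncts are irrelevant for transitivity.
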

\makeproof*{thm:tip}{
  \setcounter{theorem}{\thetip}
  \begin{theorem}[Correctness of $\tip$]
    \label{thm:Correctness_tip}
    The function $\tip$ as defined in \Cref{sec:rec} is a transitive projection.
  \end{theorem}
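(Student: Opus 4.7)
The plan is to verify the three defining properties of a transitive projection from \Cref{def:ti} in turn, treating transitivity as the main technical step.

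For property (1), I would extend $\sigma$ to $\sigma' \Def \sigma \uplus [n/1]$ and show $\sigma' \models \tip(\tau,\sigma)$. The literal $n > 0$ holds trivially. For each recurrent literal of the form $\sum_x c_x (x'-x) + n\cdot c \bowtie 0$ or $e \mid \sum_x c_x (x'-x) + n\cdot c$, instantiating $n$ with $1$ yields exactly a literal of $\tau_\rec$; but $\tau_\rec$ is obtained by replacing $x_\delta$ with $x'-x$ in $\tau_\delta = \mbp(\tau_{\land\delta},\sigma\uplus[x_\delta/\sigma(x'-x)\mid x\in\vec{x}],\ldots)$, and the extended model satisfies $\tau_\delta$ by property (1) of $\mbp$, hence $\sigma \models \tau_\rec$. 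The two additional conjuncts $\mbp(\tau,\sigma,\vec{x})$ and $\mbp(\tau,\sigma,\vec{x}')$ are satisfied by $\sigma$, again by property (1) of $\mbp$. Property (2) is immediate: $\tip(\tau,\theta)$ is determined by three applications of $\mbp$ to formulas that do not depend on $\theta$, and each of these applications has a finite image as $\theta$ ranges over models of $\tau$ (property (3) of $\mbp$); the finite combination of finitely many outputs is itself finite.

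The core work is property (3), transitivity of $\to_{\tip(\tau,\sigma)}$. Assume $\vec{v} \to_{\tip(\tau,\sigma)} \vec{v}'$ and $\vec{v}' \to_{\tip(\tau,\sigma)} \vec{v}''$, witnessed by $n$-values $n_1, n_2 \in \NN_+$ respectively. I would propose $n \Def n_1 + n_2$ as the witness for the composed step $\vec{v} \to_{\tip(\tau,\sigma)} \vec{v}''$. Then $n > 0$ is immediate. For an arithmetic literal $L \Def \sum_x c_x(x'-x) + n\cdot c \bowtie 0$ with ${\bowtie} \in \{\leq,\geq,<,>,\doteq\}$, the two hypotheses give $\sum_x c_x(v'_x-v_x) + n_1 c \bowtie 0$ and $\sum_x c_x(v''_x-v'_x) + n_2 c \bowtie 0$; adding these inequalities (which is sound for every relation in this set, as addition preserves $\leq,\geq,\doteq$, and adding two strict inequalities of the same orientation yields a strict inequality of that orientation) gives $\sum_x c_x(v''_x-v_x) + (n_1{+}n_2) c \bowtie 0$, i.e., $L$ at $\vec{v},\vec{v}''$ with $n=n_1{+}n_2$. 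The analogous argument for divisibility literals uses that $e\mid a$ and $e\mid b$ imply $e\mid a+b$. Finally, $\mbp(\tau,\sigma,\vec{x})$ constrains only pre-variables and is satisfied at $\vec{x} = \vec{v}$ by the first step, and $\mbp(\tau,\sigma,\vec{x}')$ constrains only post-variables and is satisfied at $\vec{x}' = \vec{v}''$ by the second step.

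The step I expect to be the main obstacle — and hence would state carefully — is the combination of strict inequalities in the arithmetic case, because the overall argument only goes through because all recurrent literals are constrained to carry the \emph{same} relation symbol with $n$ appearing with a positive coefficient $n_1, n_2 > 0$. If $\bowtie$ were, e.g., $\not\doteq$, transitivity would fail, which is precisely why the preprocessing step described at the beginning of \Cref{sec:rec} eliminates disequalities before $\tip$ is invoked. Once this subtlety is flagged, the three properties assemble directly into a proof that $\tip$ meets \Cref{def:ti}.
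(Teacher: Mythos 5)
Your proposal is correct and follows essentially the same route as the paper's proof: consistency via the extension $\sigma \uplus [n/1]$ and the chain of $\mbp$ properties establishing $\sigma \models \tau_\rec$, finiteness from the finite image of $\mbp$, and transitivity by taking $n_1 + n_2$ as the witness and adding the two instances of each recurrent literal so that the $x'-x$ terms telescope. Your explicit handling of the conjuncts $\mbp(\tau,\sigma,\vec{x})$ and $\mbp(\tau,\sigma,\vec{x}')$ in the transitivity step, and the remark on why disequalities must be eliminated beforehand, are slightly more detailed than the paper's treatment but do not change the argument.
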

  \begin{proof}
    We have to prove that
    \begin{itemize}
      \item[(a)] $\tip(\tau,\sigma)$ is consistent with $\sigma$,
      \item[(b)] $\{\tip(\tau,\theta) \mid \theta \models \tau\}$ is finite, and
      \item[(c)] $\to_{\tip(\tau,\sigma)}$ is transitive
    \end{itemize}
    for all transitions $\tau$ and all $\sigma \models \tau$.

    For item (a), we first prove $\sigma \models \tau_\rec$.
    We have:
    \begin{align*}
                             & \sigma \models \tau                                                                                                          \\
      {} \curvearrowright {} & \sigma \uplus [x_\delta / \sigma(x'-x) \mid x \in \vec{x}] \models \tau_{\land\delta} \tag{by def.\ of $\tau_{\land\delta}$} \\
      {} \curvearrowright {} & [x_\delta / \sigma(x'-x) \mid x \in \vec{x}] \models \tau_\delta \tag{by def.\ of $\mbp$ and $\tau_{\delta}$}                \\
      {} \curvearrowright {} & [x/\sigma(x),x'/\sigma(x') \mid x \in \vec{x}] \models \tau_\rec \tag{by def.\ of $\tau_{\rec}$}                             \\
      {} \curvearrowright {} & \sigma \models \tau_\rec
    \end{align*}
    Now we prove
    $\sigma' \Def \sigma \uplus [n/1] \models \tip(\tau,\sigma)$.
    To this end, we consider the literals that are added to $\tip(\tau,\sigma)$ by the procedure described in \Cref{sec:rec} independently:
    \begin{itemize}
      \item $\sigma' \models n > 0$ is trivial
      \item if $\sigma \models (\sum_{x \in \vec{x}} c_x \cdot (x'-x) + c) \bowtie 0$, then $\sigma' \models (\sum_{x \in \vec{x}} c_x \cdot (x'-x) + n \cdot c) \bowtie 0$
      \item if $\sigma \models e|(\sum_{x \in \vec{x}} c_x \cdot (x'-x) + c)$, then $\sigma' \models e|(\sum_{x \in \vec{x}} c_x \cdot (x'-x) + n \cdot c)$
    \end{itemize}
    Moreover, we have $\sigma \models \mbp(\tau,\sigma,\vec{x})$ and $\sigma \models \mbp(\tau,\sigma,\vec{x}')$ by definition of $\mbp$, and hence we also have $\sigma' \models \mbp(\tau,\sigma,\vec{x})$ and $\sigma' \models \mbp(\tau,\sigma,\vec{x}')$.
    Therefore, $\sigma'$ is a model of $\tip(\tau,\sigma)$, so $\sigma$ is consistent with $\tip(\tau,\sigma)$.

    For item (b), note that $\tip$ only uses the provided model for computing conjunctive variable projections.
    As $\mbp$ has a finite image, the claim follows.

    For item (c), note that the conjuncts $\mbp(\tau,\sigma,\vec{x})$ and $\mbp(\tau,\sigma,\vec{x}')$ of $\tip(\tau,\sigma)$ are irrelevant for transitivity, as they do not relate $\vec{x}$ and $\vec{x}'$.
    Let $\tau' \Def \tip(\tau,\sigma)$.
    It suffices to prove
    \[
      {\to^2_{\tau'}} \subseteq {\to_{\tau'}}.
    \]
    Then the claim follows from a straightforward induction.
    Assume $\theta \models \tau'$, $\theta' \models \tau'$, and
    \[
      \theta(\vec{x}) \to_{\tau'} \theta(\vec{x}') = \theta'(\vec{x}) \to \theta'(\vec{x}').
    \]
    We prove $\hat{\theta} \models \tau'$ where
    \[
      \hat{\theta} \Def [\vec{x} / \theta(\vec{x})] \uplus [\vec{x}' / \theta'(\vec{x}')] \uplus [n / \theta(n) + \theta'(n)].
    \]
    Then the claim follows.
    Again, we consider all literals independently:
    \begin{itemize}
      \item If $\theta \models n > 0$ and $\theta' \models n > 0$, then $\hat{\theta} \models n > 0$.
      \item Consider a literal of the form $\iota \Def t \bowtie 0$ where $t = \sum_{x \in \vec{x}} c_x \cdot (x'-x) + n \cdot c$.
            We have
            \begin{align*}
              \theta(t) = {} & \sum_{x \in \vec{x}} c_x \cdot (\theta(x')-\theta(x)) + \theta(n) \cdot c                                                                               \\
              {} = {}        & \sum_{x \in \vec{x}} c_x \cdot \theta(x') - \sum_{x \in \vec{x}} c_x \cdot \theta(x) + \theta(n) \cdot c                                                \\
              {} = {}        & \sum_{x \in \vec{x}} c_x \cdot \theta'(x) - \sum_{x \in \vec{x}} c_x \cdot \theta(x) + \theta(n) \cdot c \tag{as $\theta(\vec{x}') = \theta'(\vec{x})$}
            \end{align*}
            and
            \begin{align*}
              \theta'(t) = {} & \sum_{x \in \vec{x}} c_x \cdot (\theta'(x')-\theta'(x)) + \theta'(n) \cdot c                                 \\
              {} = {}         & \sum_{x \in \vec{x}} c_x \cdot \theta'(x') - \sum_{x \in \vec{x}} c_x \cdot \theta'(x) + \theta'(n) \cdot c.
            \end{align*}
            Thus, we have:
            \begin{align*}
              \theta(t) + \theta'(t) = {} & -\sum_{x \in \vec{x}} c_x \cdot \theta(x) + \theta(n) \cdot c + \sum_{x \in \vec{x}} c_x \cdot \theta'(x')+ \theta'(n) \cdot c \\
              {} = {}                     & \sum_{x \in \vec{x}} c_x \cdot \theta'(x') - \sum_{x \in \vec{x}} c_x \cdot \theta(x) + (\theta(n) + \theta'(n)) \cdot c       \\
              {} = {}                     & \sum_{x \in \vec{x}} c_x \cdot \hat{\theta}(x') - \sum_{x \in \vec{x}} c_x \cdot \hat{\theta}(x) + \hat{\theta}(n) \cdot c     \\
              {} = {}                     & \hat{\theta}(t)
            \end{align*}
            Therefore, $\theta \models \iota$ and $\theta' \models \iota$ implies $\hat{\theta} \models \iota$ for all ${\bowtie} \in \{\leq,\geq,<,>,=\}$.
      \item Consider a literal of the form $\iota \Def e|t$ where $t = \sum_{x \in \vec{x}} c_x \cdot (x'-x) + n \cdot c$.
            Then we again obtain
            \[
              \theta(t) + \theta'(t) = \hat{\theta}(t)
            \]
            as above.
            Thus, $\theta \models \iota$ and $\theta' \models \iota$ imply $\hat{\theta} \models \iota$.
            \qed
    \end{itemize}
  \end{proof}
}

\report{
  \section{Proving Unsafety}\label{sec:Unsafety}

We now explain how to adapt \Cref{alg} for also proving unsafety.
Assume that the satisfiability check in \Cref{alg:err2} is successful.
Then an error state is reachable from an initial state via the current trace.
However, the trace may contain learned transitions, so this does not imply unsafety of $\TT$.
The idea for proving unsafety is to replace learned transitions with \emph{accelerated transitions} that result from applying \emph{acceleration techniques}.
\begin{definition}[Acceleration]
  A function $\accel: \QF(\Sigma) \to \QF(\Sigma)$ is called an \emph{acceleration technique} if ${\to_{\accel(\pi)}} \subseteq {\to^+_{\pi}}$ for all relational formulas $\pi$.
\end{definition}
So in contrast to TRL's learned transitions, accelerated transitions under-ap\-prox\-i\-mate transitive closures, and hence they are suitable for proving unsafety.
For arithmetical theories, acceleration techniques are well studied \cite{kroening13,bozga10,acceleration-calculus} (our implementation uses the technique from \cite{acceleration-calculus}).

For any vector of transition formulas $\vec{\rho} = [\rho_1,\ldots,\rho_k]$, let $\compose(\vec{\rho})$ be a transition formula such that ${\to_{\compose(\vec{\rho})}}$ is the composition of the relations ${\to_{\rho_1}}, \ldots, {\to_{\rho_k}}$.
Moreover, for a learned transition $\pi$, we say that $\vec{\tau}_\Loop$ \emph{induced} $\pi$ if we had $\vec{\tau}_\Loop = [\tau_s,\ldots,\tau_{s+\ell-1}]$ in \Cref{alg:loop} when $\pi$ was learned.
Let $\succ$ be the total ordering on the elements of the vector $\vec{\pi}$ from \Cref{alg} with $\pi_i \succ \pi_j$ iff $i > j$, i.e., the input formula $\tau$ is minimal w.r.t.\ $\succ$, and a learned relation is smaller than all relations that were learned later.
Then for vectors of transitions from $\{\mbip(\pi,\sigma) \mid \pi \in \vec{\pi}, \sigma \models \pi\}$, we define the function $\underapprox$ (which yields an \underline{u}nder-\underline{a}pproximation) as follows:
\begin{align*}
  &\underapprox([\eta_1,\ldots,\eta_k]) \Def \compose([\underapprox(\eta_1),\ldots,\underapprox(\eta_k)])\\
  &\underapprox(\eta) \Def
  \begin{cases}
    \eta & \text{if } \eta \models \tau \\
    \accel(\underapprox(\vec{\tau}_\Loop)) & \text{if } \eta \not\models \tau \text{ and}\\
& \phantom{\text{if }} \text{the loop } \vec{\tau}_\Loop \text{ induced }  \underset{\succ}{\min}\{\pi \in \vec{\pi} \mid \eta \models \pi\}
  \end{cases}
\end{align*}
Note that $\underapprox$ is well defined,
 as the following holds for all $\eta' \in \vec{\tau}_\Loop$ in the case $\eta \not\models
 \tau$:
\[
  \min_\succ\{\pi \in \vec{\pi} \mid \eta \models \pi\} \succ \min_\succ\{\pi \in \vec{\pi} \mid \eta' \models \pi\}
\]
The reason is that
 $\min_\succ\{\pi \in \vec{\pi} \mid \eta \models \pi\}$
is induced by $\vec{\tau}_\Loop$, and thus the elements of $\vec{\tau}_\Loop$ are
conjunctive variable projections of $\tau$, or of relations that were learned before
$\min_\succ\{\pi \in \vec{\pi} \mid \eta \models \pi\}$.
Hence, when defining $\underapprox(\eta)$, the ``recursive call''
$\underapprox(\vec{\tau}_\Loop)$ only refers to formulas with smaller index in $\vec{\pi}$, i.e., the
recursion ``terminates''.

So $\underapprox$ leaves original transitions unchanged.
For learned transitions
$\eta$, it first
computes an under-approximation
of the loop $\vec{\tau}_\Loop$ that induced $\eta$, and then it applies acceleration,
resulting in an under-approximation of the transitive closure.
The reason for applying $\underapprox$ before acceleration is that $\vec{\tau}_\Loop$ may again contain learned transitions, which have to be under-approximated first.

To improve \Cref{alg},
instead of returning $\unknown$ in \Cref{alg:err2}, now we first obtain a model $\sigma$ from the SMT solver.
Then we compute the current trace $\vec{\tau} = \trace_{b-1}(\sigma,\vec{\pi}) = [\tau_1,\ldots,\tau_{b-1}]$ (note that the trace has length $b-1$, as $b$ was incremented in \Cref{alg:err1}).
Next, computing $\pi_\underapprox \Def \underapprox(\vec{\tau})$ yields an under-approximation of the states that are reachable with the current trace, but more importantly, $\pi_\underapprox$ also under-approximates $\to^+_\tau$.
The reason is that $\pi_\underapprox$ is constructed from original transitions
by applying $\compose$ (which is exact) and $\accel$ (which yields under-approximations).
Hence, we return $\unsafe$ if there is an initial state $\vec{v}$ and an error state $\vec{v}'$ with $\vec{v} \to_{\pi_\underapprox} \vec{v}'$, i.e., if $\psi_\init \land \pi_\underapprox \land \psi_\err[\vec{x}/\vec{x}']$ is satisfiable.

\begin{example}[Proving Unsafety]
  Consider the relation
  \[
    \underbrace{y > 0 \land x' \doteq x + 1 \land y' \doteq y - 1 \land z' \doteq z}_{\tau_>} {} \lor {} \underbrace{y \doteq 0 \land x' \doteq x \land y' \doteq z \land z' \doteq z}_{\tau_=} \tag{$\tau$}
  \]
  with the initial states given by $\psi_\init \Def x \leq 0$ and the error states given
  by
  $\psi_\err \Def x \geq 1000$.
  Assume that TRL obtains the trace $[\tau_>]$ and learns the relation
  \begin{equation}
    \label{eq:unsafe-learned1}
    n > 0 \land y > 0 \land x' \doteq x + n \land y' \doteq y - n \land z' \doteq z \land y' \geq 0. \tag{$\tau_>^+$}
  \end{equation}
  Next, assume that TRL obtains the trace $[\tau_=,\tau_>^+]$ and learns the relation
  \begin{equation}
    \label{eq:unsafe-learned1}
    y \doteq 0 \land z > 0 \land x' > x \land z > y' \land y' \geq 0 \land z' \doteq z. \tag{$\hat{\tau}^+$}
  \end{equation}
  Note that the transitive closure of the loop $[\tau_=,\tau_>^+]$ cannot be expressed precisely with linear arithmetic, and hence we only obtain the inequations above for $x'$ and $y'$.
  Then an error state is reachable with the trace $[\hat{\tau}^+]$ and we\report{
    \pagebreak[3]} get:
  \begin{align*}
    & \underapprox([\hat{\tau}^+]) \\
    {} = {} & \accel(\underapprox([\tau_=,\tau_>^+])) \tag{as $[\tau_=,\tau_>^+]$ induced $\hat{\tau}^+$} \\
    {} = {} & \accel(\compose(\underapprox(\tau_=),\underapprox(\tau_>^+))) \\
    {} = {} & \accel(\compose(\tau_=,\accel(\underapprox([\tau_>])))) \tag{as $\tau_= \models \tau$ and $[\tau_>]$ induced $\tau_>^+$} \\
    {} = {} & \accel(\compose(\tau_=,\accel(\tau_>))) \tag{as $\tau_> \models \tau$}\\
    {} = {} & \accel(\compose(\tau_=,\tau^+_>)) \tag{as ${\to_{\tau_>}^+} = {\to_{\tau^+_>}}$} \\
    {} = {} & \accel(y \doteq 0 \land n > 0 \land z > 0 \land x' \doteq x + n \land y' \doteq z - n \land z' \doteq z \land y' \geq 0) \\
    {} = {} & y \doteq 0 \land m > 0 \land z > 0 \land x' \doteq x + m \cdot z \land y' \doteq 0 \land z' \doteq z \tag{$\check{\tau}^+$}
  \end{align*}
  For the last step, note that $\check{\tau}^+$ precisely characterizes the transitive closure of $\compose(\tau_=,\tau^+_>)$ for the case $n \doteq z$, and hence it is a valid under-approximation.
  Then a model like $[x/y/0, z/1, m/1000, \; 
 x'/1000,  y'/0,z'/1]$ satisfies $\psi_\init \land \check{\tau}^+ \land \psi_\err[\vec{x}/\vec{x}'] = x \leq 0 \land \check{\tau}^+ \land x' \geq 1000$, which proves unsafety.
\end{example}

  \section{Related Work}
\label{sec:related}

Most state-of-the-art infinite state model checking algorithms rely on \emph{interpolation} \cite{spacer,gspacer,tpa-multiloop,imc,lawi,eldarica,pdkind,chatterjee24}.
As discussed in \Cref{sec:intro}, these techniques are very powerful, but also fragile, as interpolation depends on the implementation details of the underlying SMT solver.
In contrast, robustness of TRL depends on the underlying transitive projection.
Our implementation for linear integer arithmetic is a variation of the recurrence analysis from \cite{kincaid15}, which is very robust in our experience.
In particular, our recurrence analysis does not require an SMT solver.

Some state-of-the-art solvers like \textsf{Golem} \cite{golem} also implement \emph{$k$-induction} \cite{kind}, which is more robust than interpolation-based techniques.
However, $k$-induction can only prove $k$-inductive properties, and our experiments (see \Cref{sec:experiments}) show that it is not competitive with other state-of-the-art approaches.

Currently, the most powerful model checking algorithm is \emph{Spacer with global guidance} (GSpacer).
In GSpacer, interpolation is optional, and thus it is more robust than Spacer (without global guidance) and other interpolation-based techniques.
GSpacer is part of the IC3/PDR family of model checking algorithms \cite{ic3}, which differ fundamentally from BMC-based approaches like TRL:
While the latter unroll the transition relation step by step, IC3/PDR proves global properties by combining local reasoning about a single step with induction, interpolation, or other mechanisms like global guidance.
This fundamental difference can also be seen in our evaluation (\Cref{sec:experiments}), where both GSpacer and TRL find many unique safety proofs.
A main advantage of IC3/PDR over BMC-based techniques is scalability, as unrolling the transition relation can quickly become a bottleneck for large systems.
However, TRL mitigates this drawback, as it always backtracks when the trace contains a loop.
Hence, TRL rarely needs to consider long unrollings.

For program verification, there are several techniques that use \emph{transition invariants} or closely related concepts like \emph{loop summarization} or \emph{acceleration} \cite{kincaid15,kroening15,abmc,adcl,silverman19,kincaid24,kroening13,bozga10,FlatFramework}.

Acceleration-based techniques \cite{kroening15,abmc,adcl,bozga10,FlatFramework} try to compute transitive closures of loops exactly, or to under-ap\-prox\-i\-mate them.
However, even for simple loops like {\tt while(...) x = x + y}, meaningful under-ap\-prox\-i\-ma\-tions cannot be expressed in a decidable logic.
For example, for the loop above, the non-linear formula $\exists n.\ n > 0 \land x' \doteq x + n \cdot y$ would be a precise characterization of the transitive closure.
Thus, these approaches either only use acceleration for loops where decidable logics are sufficient, or they resort to undecidable logics, which is challenging for the underlying SMT solver.
TRL is not restricted to under-ap\-prox\-i\-ma\-tions, and hence it can ap\-prox\-i\-mate arbitrary loops without resorting to undecidable logics.

As mentioned in \Cref{sec:intro}, TRL has been inspired by our work on accelerated bounded
model checking (ABMC) \cite{abmc}.
For the reasons explained above, TRL is more powerful than ABMC for proving safety (see also \Cref{sec:experiments}), but it does not subsume ABMC for that purpose.
The reason is that ABMC's acceleration sometimes yields non-linear relations, whereas TRL only learns linear relations.
On the other hand, ABMC is more powerful than TRL for proving unsafety, as TRL only tries to construct a counterexample from a single trace, whereas ABMC explores many different traces.
Hence, both approaches are orthogonal in practice.
Apart from that, a major difference is that ABMC uses blocking clauses only sparingly, which is disadvantageous for proving safety.
The reason is that blocking clauses reduce the search space for counterexamples, and to prove safety, this search space has to be exhausted, eventually.
However, using them whenever possible is disadvantageous as well, as it blows up the underlying SMT problem unnecessarily.
Hence, TRL uses them on demand instead.
Other important conceptual differences include TRL's use of: model based projection, which
cannot be used by ABMC, as acceleration yields formulas in theories without effective
quantifier elimination; backtracking, which is useful for proving safety as it keeps the
underlying SMT problem small, but it would slow down the discovery of
counterexamples in
ABMC.  

Our tool \tool{LoAT} also implements ADCL \cite{adcl}, which embeds acceleration into a depth-first search for counterexamples (whereas ABMC and TRL perform breadth-first search).
Thus, TRL is conceptually closer to ABMC, and for proving unsafety, ABMC is superior to ADCL (see \cite{abmc}).
However, as witnessed by the results of the annual \emph{Termination Competition} \cite{termcomp}, ADCL is currently the most powerful technique for proving non-termination of transition systems (see \cite{adcl-nt}).

In contrast to acceleration-based techniques, recurrence analysis and loop summarization \cite{kincaid15,silverman19,kincaid24,kroening13} over-ap\-prox\-i\-mate transitive closures of loops.
TRL is not restricted to over-ap\-prox\-i\-ma\-tions, which allows for increased precision, as our transitive projections can perform a case analysis based on the provided model.
Note that this cannot be done in a framework that requires over-ap\-prox\-i\-ma\-tions, as $\tip$ cannot be used to compute over-approximations, in general (see \Cref{remark:properties-tip}).

The techniques from \cite{silverman19,kincaid24} express over-ap\-prox\-i\-ma\-tions using (variations of) \emph{vector addition systems with resets (VASRs)}, which are less expressive than our transitive projections, as the latter may yield arbitrary relational formulas.
However, \cite{silverman19,kincaid24} ensure that the computed ap\-prox\-i\-ma\-tions are the best VASR-ap\-prox\-i\-ma\-tions.
Our approach does not provide any such guarantees, so both approaches are orthogonal.

Apart from these techniques, transition invariants have mainly been used in the context of termination analysis and temporal verification \cite{tsitovich11,compositional_transition_invariants,heizmann10,podelski11,zuleger18}.
We expect that our approach can be adapted to these problems, but it is unclear whether it would be competitive.

  \section{Experiments and Conclusion}
\label{sec:experiments}

We implemented TRL in our tool \tool{LoAT} \cite{loat}.
Our implementation uses the SMT solvers \tool{Yices} \cite{yices} and \tool{Z3} \cite{z3}, and currently, it is restricted to linear integer arithmetic.
We evaluated our approach on all examples from the category LIA-Lin (linear CHCs with linear integer arithmetic) from the CHC competitions~2023 and 2024 \cite{CHC-COMP} (excluding duplicates), resulting in a collection of 626 problems from numerous applications like verification of \href{https://github.com/chc-comp/hcai-bench}{\tool{C}}, \href{https://github.com/chc-comp/rust-horn}{\tool{Rust}}, \href{https://github.com/chc-comp/jayhorn-benchmarks}{\tool{Java}}, and \href{https://github.com/chc-comp/hopv}{higher-order} programs, and \href{https://github.com/mattulbrich/llreve}{regression verification of \tool{LLVM} programs},
see \cite{chc-comp23,chc-comp24} for details.
By using CHCs as input format, our approach can be used by any CHC-based tool like \tool{Korn} \cite{korn} and \tool{SeaHorn} \cite{seahorn} for \pl{C} and \CXX{} programs, \tool{JayHorn} for \pl{Java} programs \cite{jayhorn}, \tool{HornDroid} for \pl{Android} \cite{horndroid}, \tool{RustHorn} for \pl{Rust} programs \cite{rusthorn}, and \tool{SmartACE} \cite{smartACE} and \tool{SolCMC} \cite{solcmc} for \pl{Solidity}.

We compared TRL with the techniques of leading CHC solvers.
More precisely, we evaluated the following configurations:
\begin{description}
  \item[\tool{LoAT}] We used \tool{LoAT}'s implementations of TRL (\tool{LoAT TRL}), ABMC \cite{abmc}
    (\tool{LoAT ABMC}), and \emph{$k$-induction} \cite{kind} (\tool{LoAT KIND}).
  \item[\tool{Z3} \cite{z3}] We used \tool{Z3} 4.13.3, where we evaluated its implementations of the Spacer \cite{spacer} (\tool{Z3 Spacer}) and GSpacer \cite{gspacer} (\tool{Z3 GSpacer}) algorithms, and of BMC (\tool{Z3 BMC}).
  \item[\tool{Golem} \cite{golem}] We used \tool{Golem} 0.6.2, where we evaluated its implementations of \emph{transition power abstraction}~\cite{tpa-multiloop} (\tool{Golem TPA}), \emph{interpolation based model checking} \cite{imc} (\tool{Golem IMC}), \emph{lazy abstraction with interpolants} \cite{lawi} (\tool{Golem LAWI}), \emph{predicate abstraction} \cite{predicate_abstraction} with \emph{CEGAR} \cite{cegar} (\tool{Golem PA}), \emph{property directed $k$-induction} \cite{pdkind} (\tool{Golem PDKIND}), Spacer (\tool{Golem Spacer}), and BMC (\tool{Golem BMC}).
        Note that \tool{Golem}'s implementation of $k$-induction only applies to linear CHC problems with a specific structure (just one fact, one recursive rule, and one query), so we excluded it from our evaluation.
  \item[\tool{Eldarica} \cite{eldarica}] We used the default configuration of \tool{Eldarica} 2.1, which is based on predicate abstraction and CEGAR.
\end{description}
We ran our experiments on \href{https://help.itc.rwth-aachen.de/service/rhr4fjjutttf/article/fbd107191cf14c4b8307f44f545cf68a/}{CLAIX-2023-HPC nodes} of the \paper{\href{https://help.itc.rwth-aachen.de/service/rhr4fjjutttf/}{RWTH Uni\-ver\-si\-ty High Performance Computing Cluster}}\report{RWTH Uni\-ver\-si\-ty High Performance Computing Cluster\footnote{\url{https://help.itc.rwth-aachen.de/service/rhr4fjjutttf}}} with a memory limit of 10560 MiB ($\approx$~11GB) and a timeout of 300~s per example.

\begin{figure}[th!]
  \begin{minipage}{0.53\textwidth}
    \begin{tabular}{|c||c||c|c||c|c|}
      \hline                       & \multirow{2}{*}{\checkmark} & \multicolumn{2}{c||}{$\safe$} & \multicolumn{2}{c|}{$\unsafe$} \\
      \hhline{~~----}              &                 & \checkmark                    & {\bf !}   & \checkmark      & {\bf !} \\
      \hline\hline \tool{LoAT TRL} & \underline{386} & 267                           & 16(19)    & 119             & 1(9)    \\
      \hline \tool{Z3 GSpacer}     & 373             & \underline{287}               & 16(16)    & 86              & 3(3)    \\
      \hline \tool{Golem Spacer}   & 340             & 231                           & --        & 109             & --      \\
      \hline \tool{Z3 Spacer}      & 339             & 236                           & --        & 103             & --      \\
      \hline \tool{Golem IMC}      & 318             & 215                           & 2(2)      & 103             & 0(0)    \\
      \hline \tool{Golem PDKIND}   & 315             & 224                           & 2(2)      & 91              & 0(0)    \\
      \hline \tool{Eldarica}       & 298             & 219                           & 7(7)      & 79              & 0(0)    \\
      \hline \tool{Golem LAWI}     & 290             & 182                           & 0(0)      & 108             & 1(1)    \\
      \hline \tool{Golem TPA}      & 289             & 170                           & 1(2)      & 119             & 2(4)    \\
      \hline \tool{LoAT ABMC}      & 272             & 142                           & 1(--)     & \underline{130} & 2(--)   \\
      \hline \tool{Golem PA}       & 253             & 160                           & --        & 93              & --      \\
      \hline \tool{LoAT KIND}      & 245             & 133                           & 0(0)      & 112             & 0(0)    \\
      \hline \tool{Golem BMC}      & 145             & 34                            & 0(0)      & 111             & 0(0)    \\
      \hline \tool{Z3 BMC}         & 145             & 32                            & --        & 113             & --      \\
      \hline
    \end{tabular}
    \label{tab}
  \end{minipage}
  \hspace{-1.5em}
  \begin{minipage}{0.49\textwidth}
    \begin{tikzpicture}[scale=0.73]
      \begin{axis}[
        legend pos=south east,
        ylabel=solved instances,
        y label style={at={(axis description cs:-0.07,.5)},anchor=south},
        y tick label style={rotate=90},
        xticklabel={$\pgfmathprintnumber{\tick}$s},
        ymin=230,
        ymax=390,
        xmin=-10,
        xmax=300,
        legend columns=2]
        \addplot[color=black,solid,thick] table[col sep=comma,header=false,x index=0,y index=1] {til.csv}; \addlegendentry{\tool{\scriptsize TRL}}
        \addplot[color=violet,densely dashed,thick] table[col sep=comma,header=false,x index=0,y index=1] {gspacer.csv}; \addlegendentry{\tool{\scriptsize GSpacer}}
        \addplot[color=red,dashed,thick] table[col sep=comma,header=false,x index=0,y index=1] {spacer.csv}; \addlegendentry{\tool{\scriptsize Golem Spacer}}
        \addplot[color=blue,densely dotted,thick] table[col sep=comma,header=false,x index=0,y index=1] {z3spacer.csv}; \addlegendentry{\tool{\scriptsize Z3 Spacer}}
        \addplot[color=green,loosely dashed,thick] table[col sep=comma,header=false,x index=0,y index=1] {imc.csv}; \addlegendentry{\tool{\scriptsize IMC}}
        \addplot[color=lightgray,dotted,thick] table[col sep=comma,header=false,x index=0,y index=1] {pdkind.csv}; \addlegendentry{\tool{\scriptsize PDKIND}}
        \addplot[color=purple,loosely dotted,thick] table[col sep=comma,header=false,x index=0,y index=1] {eld.csv}; \addlegendentry{\tool{\scriptsize Eldarica}}
      \end{axis}
    \end{tikzpicture}
  \end{minipage}
\end{figure}

\noindent
The results can be seen in the table \vpageref[above]{tab}.
The first column marked with \checkmark contains the number of solved
instances (i.e., all remaining instances could not be solved by the respective configuration).
The columns with {\bf !} show the num\-ber of unique proofs, i.e., the number of examples that could only be solved by the corresponding configuration.
Such a comparison only makes sense if just one implementation of each algorithm is considered.
For instance, GSpacer is an improved version of Spacer, so \tool{Z3 GSpacer}, \tool{Z3 Spacer}, and \tool{Golem Spacer} work well on the same class of examples.
Hence, if all of them were considered, they would find very few unique proofs.
The same is true for \tool{Eldarica} and \tool{Golem PA} as well as \tool{Golem BMC} and \tool{Z3 BMC}.
Thus, for {\bf !} we disregarded \tool{Z3 Spacer}, \tool{Golem Spacer}, \tool{Golem PA}, and \tool{Z3 BMC}.

The numbers in parentheses in the columns marked with {\bf !} show the number of unique proofs when also disregarding \tool{LoAT ABMC}.
We included this number since TRL was inspired by ABMC.
So analogously to GSpacer and Spacer, one could consider TRL to be an improved version of ABMC for proving safety.

The table shows that TRL is highly competitive:
Overall, it solves the most instances.
Even more importantly, it finds many unique proofs, i.e., it is orthogonal to existing model checking algorithms, and hence it improves the state of the art.

Regarding safe instances, TRL is only outperformed by GSpacer.
Regarding unsafe instances, TRL is only outperformed by ABMC, which has specifically been
developed for finding (deep) counterexamples, whereas TRL's main purpose is proving
safety.

The plot on the right shows how many instances were solved within 300~s, where we only include the seven best configurations for readability.
It shows that TRL is highly competitive, not only in terms of solved examples, but also in terms of runtime.

Our implementation is open-source and available on Github \cite{loat-web}.
For the sources, a pre-compiled binary, and more information on our evaluation, we refer to \cite{website}.

\subsubsection*{Conclusion}

We presented \emph{Transitive Relation Learning (TRL)}, a powerful model checking algorithm for infinite state systems.
Instead of searching for invariants, TRL adds transitive relations to the analyzed system until its \emph{diameter} (the number of steps that is required to cover all reachable states) becomes finite, which facilitates a safety proof.
As it does not search for invariants, TRL does not need interpolation, which is in contrast to most other state-of-the-art techniques.
Nevertheless, our evaluation shows that TRL is highly competitive with interpolation-based approaches.
Moreover, not using interpolation allows us to avoid the well-known fragility of interpolation-based approaches.
Finally, integrating \emph{acceleration techniques} into TRL also yields a competitive technique for proving \emph{un}safety.

In future work, we plan to support other theories like reals, bitvectors, and arrays, and
we will investigate an extension to temporal verification.

}
\paper{
  \input{related_short}
  \input{experiments_short}
}

\bibliographystyle{splncs04}
\paper{
  \bibliography{refs,crossrefs,strings}
}
\report{
  \bibliography{refs,crossrefs,strings}
  \clearpage 
  \begin{appendix}
  \appendixproofsection{Appendix}\label{sec:proofs}
  \appendixproof*{thm:soundness}
  \appendixproof*{thm:tip}
\end{appendix}

}

\end{document}